\newtheorem{theorem}{Theorem}
\newtheorem{lemma}[theorem]{Lemma}
\newtheorem{definition}[theorem]{Definition}
\newtheorem{assumption}[theorem]{Assumption}
\newtheorem{remark}[theorem]{Remark}
\newcommand*{\rom}[1]{\expandafter\@slowromancap\romannumeral #1@}
\begin{document}

\title{
   Optimal Liquidation in a Finite Time Regime Switching Model with Permanent and Temporary Pricing Impact
    }
\author{Baojun Bian\thanks{Department of Mathematics, Tongji University,
Shanghai 200092, China. {\tt bianbj@tongji.edu.cn}. This work was supported in part
by National Science Foundation of China(No.11371280,
No.71090404). }\; and
Nan Wu\thanks{Department of Mathematics, Imperial College, London SW7 2BZ, UK. {\tt nan.wu07@imperial.ac.uk}}   and
Harry Zheng\thanks{Department of Mathematics, Imperial College, London SW7 2BZ, UK.
{\tt h.zheng@imperial.ac.uk}.}}
\date{}
\maketitle

\noindent{\bf Abstract}.
In this paper we discuss the  optimal liquidation over a finite time horizon until the exit time.  The drift and diffusion terms of the asset price are general functions depending on  all variables including control and
market regime. There is also a local nonlinear transaction cost associated to the liquidation. The model deals with both the permanent impact and the temporary impact in a regime switching framework. The problem can be solved with the dynamic programming principle.  The optimal value function is the unique continuous viscosity solution to  the HJB equation and can be computed with the finite difference method.
\bigskip

\noindent{\bf Keywords}. Optimal liquidation, permanent and temporary pricing impact, regime switching, viscosity solution.
\bigskip

\section{Introduction} \label{S:intro}
Optimal liquidation has attracted active research in recent years due to the liquidity risk. In a frictionless and competitive market an asset can be traded with any amount at any rate without affecting the market price of the asset. The optimal liquidation then becomes an optimal stopping problem which maximizes the expected liquidation value at the optimal stopping time.
In an incomplete market with trading constraints on the volume and the rate and with the liquidation impact on the underlying asset price, the optimal liquidation is difficult to model and to solve.

Despite the wide recognition of the importance of the liquidity risk,
there is no universal agreement on the definition of liquidity.
In the academic literature the liquidity is usually defined in terms of
the bid-ask spread and/or the transaction cost
whereas in the practitioner literature the illiquidity is often
viewed as the inability of buying and selling securities.
  Black \cite{Black71} classifies the following four major properties of
the liquidity:
the immediacy of the transaction, the tightness of the spread, the resiliency
of the market,
and the depth of the market. The concept of liquidity can be summarized as
the ability for traders to execute large trades rapidly at a price close
to current market price. The liquidity risk refers to the loss stemming
from the cost of liquidating a position.

Due to lack of universal agreement on the definition of liquidity, there are many different forms of mathematical characterizations. Apart from commonly used transaction cost and bid-ask spread and trading constraints (Cvitanic and Karatzas \cite{CK96}, Jouini \cite{J00}, etc.), the other descriptions include, for example, that the order of a large investor adversely affects the stock price before being exercised (Bank and Baum \cite{BB04}),  that the market has a supply curve that depends on the order size of investors (\c{C}etin et al. \cite{CJP04}), that  trading can only happen at jump times of a Cox process  (Gassiat et al.~\cite{GGP11}), that the asset price is affected by the permanent   and temporary impact of liquidation (Schied and Sch\"{o}neborn \cite{SS07}), etc.
Once the mathematical framework is chosen for the liquidity risk one can then study specific  problems such as the arbitrage pricing theory, the optimal investment and consumption,  etc., see  \cite{BB04, CJP04, CK96, GGP11, J00, SS07} and references therein.

This paper studies the optimal liquidation in the presence of liquidity risk. There are several variations in the problem formulation in the literature, including finite or infinite time horizon, continuous trading or optimal stopping,  geometric Brownian motion (GBM) asset price process or Markov modulated process, etc.
Pemy et al.~\cite{PZY07} study the optimal liquidation  over an infinite time horizon. The stock
price follows a GBM process with an extra term that reflects the permanent impact of liquidation
on the asset price and there is no temporary impact.  It is a constrained control problem which implicitly
assumes that the stock holdings will never be sold out for any admissible trading strategies.
The value function is
the unique continuous viscosity solution to the Hamilton-Jacobi-Bellman (HJB) equation (two state variables and no time variable).
In the continuous time finite state Markov chain framework
Pemy and Zhang \cite{PZ06} study an optimal stopping problem of liquidation in finite time horizon.
Pemy et al.~\cite{PZY08} discuss the optimal liquidation over an infinite time, similar to that in \cite{PZY07}. The main difference is that the asset price follows a
GBM process in which the drift and diffusion coefficients are determined by market regimes and  the temporary impact of  liquidation is reflected in the payoff function and there is no permanent impact.  The assumptions and the conclusions are basically the same as those in \cite{PZY07}.

In this paper we discuss the  optimal liquidation over a finite time horizon until the exit time.  The drift  and diffusion  coefficients $\mu$ and $\sigma$ of the asset price   are general functions depending on  all variables including control (see (\ref{E:ModelSetup4})), which implies the trading may cause the permanent impact on the asset price. There are also  nonlinear transaction costs associated to the trading through the temporary pricing impact function $\phi$ and the block liquidation impact function $g$ (see (\ref{payoff7})). The model deals with both the permanent impact and the temporary impact in a regime switching framework. We can apply the dynamic programing principle to derive the HJB equation that involves time variable as well as state variables, which makes the  proofs   more involved than those in \cite{PZY07, PZY08}.
Our main contribution is that we show the optimal value function is the unique continuous viscosity solution to  the HJB equation, which opens the way to solving the problem with the finite difference method.

The paper is organized as follows. Section 2 formulates the optimal liquidation problem and states the main results of the paper. Section 3 gives a numerical  example. Section 4 proves that the optimal value function is continuous (Theorem~\ref{T:Continuity}). Section 5 proves that the value function is the viscosity solution to the HJB equation (Theorem~\ref{T:solution}). Section 6 proves the comparison theorem for the uniqueness of the viscosity solution (Theorem~\ref{T:Comparison}).

\section{Model and Main Results} \label{S:Model}
Let $(\Omega,{\mathcal F},P)$ be a probability space and $({\mathcal F}_{r})_{0\leq r\leq T}$ be the natural filtration generated by a standard Brownian motion process $W$ and
a continuous time Markov chain process $\alpha$, augmented by all $P$-null sets. Assume $W$ and $\alpha$ are independent to each other. Assume that the Markov chain has a
finite state space $\mathbb{M}=\{1,\ldots,m\}$ and is generated by the generator $Q=\{q_{ij}\}$, where $q_{ij}\geq0$ for $i,j\in\mathbb{M}$, $j\neq i$ and
$\sum_{j=1}^m q_{ij}=0$ for each $i\in\mathbb{M}$. The transitional
probability is given by
\begin{equation}\label{E:ModelSetup1}
P\{\alpha(t+\Delta)=j|\alpha(t)=i\}=\begin{cases}
q_{ij}\Delta+o(\Delta)&\text{if $j\neq i$,}\\
1+q_{ii}\Delta+o(\Delta)&\text{if $j=i$}
\end{cases}
\end{equation}
for small time interval $\Delta>0$. The continuous time Markov chain $\alpha(r)_{0\leq r\leq T}$ models
the  economic environment which affects the growth rate and the volatility of the asset price.

Let $r\in[t,T]$ be the time variable, where $T$ is the fixed terminal time and $t\in[0,T)$ is the starting
time. Let $S(r)_{0\leq r\leq T}$ denote the stock price and $X(r)_{0\leq r\leq T}$ the number of shares of stock.
Let $u(r)_{0\leq r\leq T}$ denote the rate of selling the stock, which is a control variable decided by the trader. We call $u=\{u(r)\}_{0\leq r\leq T}$ is admissible if it is progressively measurable and $u(r)\in U$ for a compact set $U\subset[0,\infty)$ for all $t\leq r\leq T$. The stock price $S(r)$ follows a stochastic differential equation with regime switching
\begin{equation}\label{E:ModelSetup4}
dS(r)=\mu(r,S(r),u(r),\alpha(r))dr+\sigma(r,S(r),u(r),\alpha(r))dW(r)
\end{equation}
and the stock holding $X(r)$ follows the dynamics
$$
dX(r)=-u(r)dr.
$$
Since the drift and the diffusion terms of $S$ are affected by the trading strategy $u$ there is the permanent impact of liquidation on the asset price. Such an impact may be negligible  for a small trader (when $u$ is small) but can be significant for a large trader (when $u$ is large). We implicitly assume that the asset price $S(r)$ is positive for all $t\leq r\leq T$. A sufficient condition that guarantees this is that $S$ follows a geometric Brownian motion process with drift and diffusion coefficients depending on time, control and Markov state. We denote by $K$ some generic positive constant which may take different values at different places.

\begin{assumption}\label{ass1}
Functions $f=\mu, \sigma$  satisfy,
 for all $t,s\in[0,T]$, $x,y\in\mathbb{R}$, $\upsilon\in[0,\infty)$ and $\ell\in\mathbb{M}$,
that
\begin{align}\label{E:ModelSetup5}
&|f(t,x,\upsilon,\ell)-f(s,y,\upsilon,\ell)|\leq K(|t-s|+|x-y|)
\;\mbox{ and }\; |f(t,x,\upsilon,\ell)|\leq K(1+|x|).
\end{align}
\end{assumption}

 It can be shown, with
Assumption~\ref{ass1}, that for any admissible control process $u\in{\cal U}$ and any initial values
$(t,s,\ell)\in[0,T)\times(0,\infty)\times\mathbb{M}$, there exists a unique solution, denoted by  $\{S_{t,s,\ell}^u(r), t\leq r\leq T\}$, to equation \eqref{E:ModelSetup4}, and that
the following inequalities hold:
\begin{eqnarray}
E\left[\sup_{r\in[t,T]}{|S_{t,s,\ell}^u(r)|^{p}}\right]&\leq& K(1+s^{p}), \quad
p=1,2 \label{E:ModelSetup6} \\
E\left[\left|S_{t,s,\ell}^u(t_2)-S_{t,s,\ell}^u(t_1)\right|\right]&\leq& K(1+s)|t_2-t_1|^{1/2},\quad t_1, t_2\in [t, T] \label{E:ModelSetup7}\\
E\left[\sup_{r\in[t,T]}\left|S_{t,s_1,\ell}^u(r)-S_{t,s_2,\ell}^u(r)\right|\right]&\leq& K\left|s_1-s_2\right|, \quad s_1,s_2\in (0,\infty). \label{E:StockPriceContinuity}
\end{eqnarray}
The proofs of (\ref{E:ModelSetup6}), (\ref{E:ModelSetup7}) and (\ref{E:StockPriceContinuity}) can be found in Mao and Yuan \cite{MY06} with some minor changes to include control processes, see \cite{MY06}, Theorem3.23, Theorem 3.24 and Lemma 3.3.

Similarly,
$\{X_{t,x}^u(r), t\leq r\leq T\}$ denotes the  stock holding and $\{\alpha_{t,\ell}(r), t\leq r\leq T\}$
the Markov chain process.

Suppose a trader starts from time $t$, endowed with initial values $(X(t), S(t), \alpha(t))=(x,s,\ell)\in(0,\infty)\times(0,\infty)\times{\mathbb M}$.
Define a stopping time
$$\tau_0=\inf\{r\geq t:X_{t,x}^u(r)=0\}\wedge T.$$
 This is the first time that $X_{t,x}^u(r)$ exits from $(0,\infty)$ before or at time $T$.
 Since the model is to study
the liquidation strategy, the trader is only allowed to sell stock without buying back. When the number of shares reaches zero before
time $T$ the liquidation stops. Otherwise, it stops at time $T$.

 The expected discounted total payoff associated with  a strategy $u\in\mathcal U$
is defined by
\begin{equation}\label{payoff7}
J(t,x,s,\ell;u)=E\left[\int_t^{\tau_0}e^{-\beta(r-t)}\phi\left(u(r)\right)S_{t,s,\ell}^u(r)dr+e^{-\beta(\tau_0-t)}g(X(\tau_0))S(\tau_0)\right],
\end{equation}
where $\beta>0$ is a discount rate, $\phi$ a function measuring the temporary liquidation effect, $g$ a function measuring the block liquidation effect, and $E$  the conditional expectation given the information set ${\cal F}_t$ which is equivalent to given $X(t)=x$, $S(t)=s$ and $\alpha(t)=\ell$ since the model is Markov. The first term is the expected discounted accumulated cash value from the stock liquidation and the second term is the expected discounted cash value from the block liquidation at time $T$ for any remaining shares of the stock.

\begin{assumption}\label{ass2}
Functions $f=\phi, g$ are continuous
concave increasing on ${\mathbb R}$ and satisfy $f(0)=0$ and $f'(0)=1$. Furthermore, function $g$ is continuously differentiable and satisfies,
for all $x,y\in {\mathbb R}$, that
$$ |g(x)-g(y)|\leq K|x-y|\; \mbox{ and }\; |g'(x)-g'(y)|\leq K|x-y|.$$
\end{assumption}

Note that  in
 a completely liquid market  $\phi(\upsilon)=\upsilon$ and $g(x)=x$, and that
 $f(0)=0$ and $f'(0)=1$ imply $f(x)$ is approximately equal to $x$ when $x$ is close to 0, which means when the trading rate $u$ is small or the amount of stock $X$ is small then there is essentially no transaction cost and the liquidity impact can be ignored.
The objective of the trader is to maximize the expected discounted revenue from stock liquidation.
The value function is defined by
$$
V(t,x,s,\ell)=\sup_{u\in{\mathcal U}}J(t,x,s,\ell;u).
$$

For $\upsilon\in U$ define operators ${\mathcal L}^\upsilon$ and $\mathcal Q$ of the value function $V$
by
$$
\mathcal{L}^\upsilon V(t,x,s,\ell)=-\upsilon\frac{\partial V}{\partial x}(t,x,s,\ell)+\mu(t,s,\upsilon,\ell)\frac{\partial V}{\partial s}(t,x,s,\ell)
+\frac{1}{2}\sigma^2(t,s,\upsilon,\ell)\frac{\partial^2V}{\partial s^2}(t,x,s,\ell),
$$
and
$$
{\mathcal Q}V(t,x,s,\ell)=\sum_{j\neq\ell}q_{\ell j}\left(V(t,x,s,j)-V(t,x,s,\ell)\right).
$$

The HJB equation for the optimal control problem is, for $(t,x,s,\ell)\in[0,T)\times(0,\infty)\times(0,\infty)\times
{\mathbb M}$,
\begin{equation}\label{E:HJB}
\beta V(t,x,s,\ell)-\frac{\partial V}{\partial t}(t,x,s,\ell)-\sup_{\upsilon\in U}\left\{{\mathcal L}^\upsilon V(t,x,s,\ell)+\phi(\upsilon)s\right\}
-{\mathcal Q}V(t,x,s,\ell)=0,
\end{equation}
with the boundary condition
$$
V(t,0,s,\ell)=0
$$
and the terminal condition
$$
V(T,x,s,\ell)=g(x)s.
$$

It is easy to check that the value function is an increasing function with respect to the asset price and the stock holding. It also
has the following continuity property.
\begin{theorem}\label{T:Continuity}
Assume Assumptions \ref{ass1} and \ref{ass2}. Then
 the value function $V(\cdot,\cdot,\cdot,\ell)$ is continuous on $[0,T]\times[0,\infty)\times(0,\infty)$
for $\ell\in{\mathbb M}$.
\end{theorem}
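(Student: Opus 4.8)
The plan is to establish continuity of $V$ in each of its three continuous arguments $(t,x,s)$ separately, using the a priori estimates \eqref{E:ModelSetup6}--\eqref{E:StockPriceContinuity} on the state process together with the Lipschitz/growth properties in Assumptions \ref{ass1} and \ref{ass2}, and then combine them to get joint continuity. Throughout I would work directly with the definition $V=\sup_u J(\cdot;u)$ and the elementary inequality $|\sup_u J(\cdot;u)-\sup_u J(\circ;u)|\leq \sup_u|J(\cdot;u)-J(\circ;u)|$, so that it suffices to bound $|J(t_1,x_1,s_1,\ell;u)-J(t_2,x_2,s_2,\ell;u)|$ uniformly in $u\in\mathcal U$.

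First I would treat continuity in $s$, which is the cleanest: fixing $(t,x,\ell)$ and $u$, the stopping time $\tau_0$ depends only on $X^u_{t,x}$ and hence not on $s$, so the two payoffs differ only through the factor $S^u_{t,s,\ell}$. Splitting $\phi(u(r))$ and $g(X(\tau_0))$ off (both bounded on the compact $U$ and via the Lipschitz bound on $g$, respectively, noting $X^u_{t,x}(\tau_0)\in[0,x]$ is bounded), the difference is controlled by $e^{-\beta(r-t)}\cdot K\cdot\E[\sup_{r\in[t,T]}|S^u_{t,s_1,\ell}(r)-S^u_{t,s_2,\ell}(r)|]$, which by \eqref{E:StockPriceContinuity} is $\leq K|s_1-s_2|$; the integral over $[t,\tau_0]\subset[t,T]$ only adds a bounded factor. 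Next, continuity in $x$: here both $\tau_0$ and the terminal term move. For $x_1<x_2$ one has $\tau_0(x_1)\leq\tau_0(x_2)$ pathwise (with the same $u$, fewer shares are exhausted sooner), and on $[\tau_0(x_1),\tau_0(x_2)]$ the running integrand is bounded by $K(1+\sup_r S^u(r))$ while the length of that interval is at most $(x_2-x_1)/\underline u$ on the event that $u$ is bounded below — but since $U$ may contain $0$ this needs care; instead I would bound $\tau_0(x_2)-\tau_0(x_1)$ in expectation using $X^u_{t,x_2}(\tau_0(x_1)) = x_2-x_1 \leq \int_{\tau_0(x_1)}^{\tau_0(x_2)}u(r)\,dr$ together with the boundedness of $U$, and use the terminal-term Lipschitz bound $|g(X(\tau_0))-g(\cdot)|$. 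The time variable $t$ is handled similarly by a time-shift argument, using \eqref{E:ModelSetup7} for the increment of $S$ in time and the Hölder-$\tfrac12$ control of the change in $\tau_0$.

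The main obstacle I anticipate is the $x$-continuity (and the interaction of $t$- and $x$-perturbations), precisely because the exit time $\tau_0$ enters both as an integration limit and inside the discount factor and terminal payoff, and because the trading set $U$ is only assumed compact in $[0,\infty)$ and may include $0$, so one cannot uniformly bound $\tau_0(x_2)-\tau_0(x_1)$ by $(x_2-x_1)$ times a deterministic constant in a naive way. The remedy is to estimate this time difference in $L^1(P)$ rather than pathwise: on the set where liquidation of the extra $x_2-x_1$ shares actually occurs before $T$, the cumulative traded volume forces $\int_{\tau_0(x_1)}^{\tau_0(x_2)} u\,dr \geq x_2-x_1$, and splitting the payoff difference on this set versus its complement, together with the bound $\E[\sup_{[t,T]}S^u]\leq K(1+s)$ and uniform integrability, yields a modulus of continuity in $x$. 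A secondary technical point is continuity up to the terminal time $t=T$ and down to $x=0$: at $x=0$ one has $\tau_0=t$ so $V(t,0,s,\ell)=g(0)s=0$, and the running-cost term vanishes as $x\downarrow0$ because $\tau_0\downarrow t$ in expectation, which the above $x$-estimate already delivers. Assembling the three one-variable moduli of continuity (each uniform in $u$ and locally uniform in the frozen variables, using the linear growth to control the $s$-dependent constants on compact sets) gives joint continuity of $V(\cdot,\cdot,\cdot,\ell)$ on $[0,T]\times[0,\infty)\times(0,\infty)$.
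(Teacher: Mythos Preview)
Your direct approach is genuinely different from the paper's: rather than bounding $|J(\cdot;u)-J(\circ;u)|$ uniformly in $u$, the paper removes the terminal term via Dynkin's formula, then replaces the hard exit at $\tau_0$ by a smooth damping factor $\Gamma^{\epsilon,\rho,u}_{t,x}(r)=\exp(-\rho^{-1}(X^u_{t,x}(r)+\epsilon)^-)$ so that the perturbed payoff is an integral over the \emph{deterministic} interval $[t,T]$. Continuity of the perturbed value function is then routine, and the exit-time issue is pushed into a separate quasi-uniform convergence lemma as $\rho,\epsilon\to0$.

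Your $s$-continuity step is fine, but the $x$-continuity argument has a real gap. You propose to control $E[\tau_0(x_2)-\tau_0(x_1)]$ uniformly in $u$, and this is impossible: take the control that sells at the maximal rate $N$ until the $x_1$ shares are gone and then sets $u\equiv0$; then $\tau_0(x_1)=t+x_1/N$ while $\tau_0(x_2)=T$, so the gap stays bounded away from zero no matter how small $x_2-x_1$ is. Your displayed inequality $x_2-x_1\le\int_{\tau_0(x_1)}^{\tau_0(x_2)}u\,dr$, combined with $u\le N$, only yields the \emph{lower} bound $\tau_0(x_2)-\tau_0(x_1)\ge(x_2-x_1)/N$, and the appeal to ``uniform integrability'' does not repair this. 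A fix that stays within your direct approach does exist, but it is not the one you wrote: Assumption~\ref{ass2} (concavity, $\phi(0)=0$, $\phi'(0)=1$) gives $\phi(\upsilon)\le\upsilon$, so the extra running payoff is controlled by the \emph{volume} traded rather than the time elapsed,
\[
\int_{\tau_0(x_1)}^{\tau_0(x_2)} e^{-\beta(r-t)}\phi(u(r))S(r)\,dr
\;\le\;\sup_{r}S(r)\int_{\tau_0(x_1)}^{\tau_0(x_2)}u(r)\,dr
\;\le\;(x_2-x_1)\sup_{r}S(r),
\]
and the terminal-term difference is handled case by case (using $g(0)=0$ and $X^u_{t,x_2}(\tau_0(x_2))\in[0,x_2-x_1]$ when $\tau_0(x_1)<T$). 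The paper's mollification device sidesteps all of this; your route is more elementary once the $\phi(\upsilon)\le\upsilon$ observation is in place, but as written it does not go through. Your $t$-continuity sketch would also need a genuine argument (the paper uses the dynamic programming principle for this step).
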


Since we do not know if the value function $V$ is continuously differentiable and cannot discuss the solution to the HJB equation in the classical sense,  we need to introduce the concept of the viscosity solution to the HJB equation.
\begin{definition}\label{D:Viscosity}
A system of continuous functions $V=\{V(\cdot,\cdot,\cdot,\ell)\}_{\ell\in{\mathbb M}}$ on $[0,T)\times(0,\infty)\times(0,\infty)$ is a viscosity
subsolution (resp.~supersolution) of the HJB equation \eqref{E:HJB} if, for any fixed $\ell\in{\mathbb M}$,
$\varphi\in C^{1,1,2}([0,T)\times(0,\infty)\times(0,\infty))$ and $({\bar t},{\bar x},{\bar s})\in[0,T)\times(0,\infty)
\times(0,\infty)$ such that $V(t,x,s,\ell)-\varphi(t,x,s)$ attains its maximum (resp.~minimum) at $({\bar t},{\bar s},{\bar x})$, we have
\begin{equation}\label{E:Viscosity_1}
\beta\varphi({\bar t},{\bar x},{\bar s})-\frac{\partial\varphi}{\partial t}({\bar t},{\bar x},{\bar s})
-\sup_{\upsilon\in U}\left\{{\mathcal L}^\upsilon\varphi({\bar t},{\bar x},{\bar s})+\phi(\upsilon){\bar s}\right\}-{\mathcal Q}V({\bar t},{\bar x},{\bar s},\ell)\leq0;\quad
(\text{resp.}\;\geq0).
\end{equation}
The system of continuous functions $V$ is a viscosity solution if it is both a viscosity subsolution and a viscosity supersolution.
\end{definition}

We have the following result for the value function.
\begin{theorem} \label{T:solution}
Assume Assumptions \ref{ass1} and \ref{ass2}.
Then the value function $V$ is a viscosity solution to the HJB equation \eqref{E:HJB}.
\end{theorem}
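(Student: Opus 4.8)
The plan is to obtain both the viscosity subsolution and the viscosity supersolution properties from the dynamic programming principle (DPP) combined with Dynkin's formula, in the way that is standard for regime-switching control problems; continuity of the system $V$ is already in hand by Theorem~\ref{T:Continuity}. The first step is to establish the DPP: for $(t,x,s,\ell)$ with $x>0$ and any $[t,\tau_0]$-valued stopping time $\theta$,
\[
V(t,x,s,\ell)=\sup_{u\in\mathcal U}E\Bigl[\int_t^{\theta}e^{-\beta(r-t)}\phi(u(r))S_{t,s,\ell}^u(r)\,dr+e^{-\beta(\theta-t)}V\bigl(\theta,X_{t,x}^u(\theta),S_{t,s,\ell}^u(\theta),\alpha_{t,\ell}(\theta)\bigr)\Bigr].
\]
This rests on the strong Markov property of the triple $(X,S,\alpha)$, the flow property of the SDE, the concatenation of admissible controls, and the continuity of $V$ (which makes the right-hand side meaningful and supplies enough regularity for a measurable-selection argument); since the liquidation clock is capped at $\tau_0\le T$, nothing delicate happens at the boundary $x=0$.

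For the supersolution inequality, fix $\ell$, a test function $\varphi\in C^{1,1,2}$, and a minimum point $(\bar t,\bar x,\bar s)\in[0,T)\times(0,\infty)\times(0,\infty)$ of $V(\cdot,\cdot,\cdot,\ell)-\varphi$, normalised so that $V(\cdot,\cdot,\cdot,\ell)\ge\varphi$ with equality at $(\bar t,\bar x,\bar s)$. Starting at $(\bar t,\bar x,\bar s,\ell)$, apply the DPP with the constant control $u\equiv\upsilon$, $\upsilon\in U$, and $\theta=(\bar t+h)\wedge\tau_0$; since $\bar x>0$ we have $\tau_0>\bar t$ almost surely, so $\theta=\bar t+h$ off an event of probability $o(1)$. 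Replacing $V$ by $\varphi$ on the no-jump event (legitimate because $V(\cdot,\cdot,\cdot,\ell)\ge\varphi$), applying Dynkin's formula to $r\mapsto e^{-\beta(r-\bar t)}\varphi(r,X(r),S(r))$ on the interval before the first regime-jump time $\tau_1$ of $\alpha$, and extracting the $O(h)$ contribution of a single jump out of $\ell$ (which, using $P(\tau_1\le\bar t+h\mid\alpha(\bar t)=\ell)=O(h)$ and continuity of $V$, yields $\sum_{j\ne\ell}q_{\ell j}(V(\bar t,\bar x,\bar s,j)-\varphi(\bar t,\bar x,\bar s))h+o(h)=\mathcal QV(\bar t,\bar x,\bar s,\ell)h+o(h)$), then dividing by $h$ and letting $h\downarrow0$ with the aid of \eqref{E:ModelSetup6}, \eqref{E:ModelSetup7} and dominated convergence, gives $\beta\varphi-\partial_t\varphi-\mathcal L^{\upsilon}\varphi-\phi(\upsilon)\bar s-\mathcal QV\ge0$ at $(\bar t,\bar x,\bar s)$. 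Taking the supremum over $\upsilon\in U$ completes this half.

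For the subsolution inequality, argue by contradiction. Suppose $V(\cdot,\cdot,\cdot,\ell)-\varphi$ has a maximum at $(\bar t,\bar x,\bar s)$, normalised so $V(\cdot,\cdot,\cdot,\ell)\le\varphi$ with equality at the point, yet
\[
\beta\varphi(\bar t,\bar x,\bar s)-\frac{\partial\varphi}{\partial t}(\bar t,\bar x,\bar s)-\sup_{\upsilon\in U}\bigl\{\mathcal L^{\upsilon}\varphi(\bar t,\bar x,\bar s)+\phi(\upsilon)\bar s\bigr\}-\mathcal QV(\bar t,\bar x,\bar s,\ell)=2\eta>0.
\]
By continuity of $\mu,\sigma,\phi$ and of $V$, and since the supremum is over the compact set $U$, this strict inequality persists with margin $\eta$ for all $\upsilon\in U$ simultaneously on a small parabolic neighbourhood $\mathcal N$ of $(\bar t,\bar x,\bar s)$. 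For an arbitrary $u\in\mathcal U$, let $\theta$ be the exit time of $(r,X^u(r),S^u(r))$ from $\mathcal N$ capped by $\tau_0$ and by $\bar t+h$; since $u(r)\in U$ is bounded, $X^u$ moves by at most $O(h)$ deterministically, so together with \eqref{E:ModelSetup6}--\eqref{E:ModelSetup7} the event $\{\theta=\bar t+h\}$ has probability bounded below uniformly in $u$ for small $h$, whence $E[\int_{\bar t}^{\theta}e^{-\beta(r-\bar t)}dr]\ge c_0 h$ uniformly in $u$. Feeding $V\le\varphi$ and the strict inequality on $\mathcal N$ into the DPP, and handling the regime jumps by conditioning on $\tau_1$ exactly as in the supersolution step, one gets $\sup_{u\in\mathcal U}E[\cdots]\le\varphi(\bar t,\bar x,\bar s)-\eta c_0 h<V(\bar t,\bar x,\bar s,\ell)$ for small $h$, contradicting the DPP.

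I expect the main obstacle to be the bookkeeping around the Markov-chain term. Because the functions $V(\cdot,\cdot,\cdot,j)$ for $j\ne\ell$ are only known to be continuous (Theorem~\ref{T:Continuity}), Dynkin's formula cannot be applied "across regimes", so one must isolate the first jump time $\tau_1$ of $\alpha$, run Dynkin only on the regime-$\ell$ stretch, and recover the coupling term $\mathcal QV(\bar t,\bar x,\bar s,\ell)$ through an $O(h)$ expansion, all while keeping the exit-time and moment estimates uniform in the control $u$. Establishing the DPP itself at the required level of rigour (the measurable-selection/conditioning argument) is the other point that needs care, but both are routine once the continuity of $V$ and the estimates \eqref{E:ModelSetup6}--\eqref{E:ModelSetup7} are available.
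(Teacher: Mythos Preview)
Your proposal is correct and follows essentially the same route as the paper: DPP plus Dynkin's formula, supersolution via a constant control $\upsilon$ and division by $h$, subsolution via contradiction using the continuity of the Hamiltonian on a small neighbourhood and a uniform lower bound $E[\tau-\bar t]/h\to 1$.

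The one genuine difference is in the bookkeeping you flag as the main obstacle. You plan to isolate the first jump time $\tau_1$ of $\alpha$, run Dynkin only on the regime-$\ell$ stretch, and then recover $\mathcal QV(\bar t,\bar x,\bar s,\ell)$ by a separate $O(h)$ expansion of the jump contribution. The paper avoids this two-step accounting by a standard device: it introduces the auxiliary function
\[
\psi(t,x,s,i)=\begin{cases}\varphi(t,x,s)&i=\ell,\\ V(t,x,s,i)&i\ne\ell,\end{cases}
\]
and applies Dynkin's formula for the full regime-switching process $(X,S,\alpha)$ to $e^{-\beta(r-\bar t)}\psi$, stopped at $\tau=(\bar t+h)\wedge\hat\tau_1\wedge\hat\tau_2$. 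Because $\alpha(r)=\ell$ for $r<\tau$, the diffusion part of the generator only hits $\psi(\cdot,\cdot,\cdot,\ell)=\varphi$, while the $\mathcal Q$-part of the generator produces exactly $\mathcal Q\psi(\cdot,\cdot,\cdot,\ell)=\sum_{j\ne\ell}q_{\ell j}(V(\cdot,\cdot,\cdot,j)-\varphi)$, which only requires continuity of $V(\cdot,\cdot,\cdot,j)$. So Dynkin \emph{can} be applied ``across regimes'' here, and the coupling term drops out automatically rather than through a side computation. Your approach works too; the $\psi$-trick just packages it more cleanly.

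Two smaller points. First, in your supersolution step you take $\theta=(\bar t+h)\wedge\tau_0$ without a spatial localisation; the paper also stops at the exit time $\hat\tau_2$ from a ball $B_\eta(\bar x,\bar s)$, which keeps $\varphi$ and its derivatives bounded along the path and makes the passage to the limit clean (otherwise you need growth control on $\varphi$ that is not assumed). Second, the paper simply cites the DPP from \cite{FS06} rather than proving it, so your plan to establish it is more than what the paper does.
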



One in general has to use some numerical scheme to find the value function. To ensure the numerical solution to the HJB equation is indeed the value function one has to show that the value function is the unique viscosity solution to the HJB equation, which can be achieved by the following comparison theorem.
\begin{theorem}\label{T:Comparison}
Assume Assumptions \ref{ass1} and \ref{ass2}.
Let $U$  be a viscosity subsolution and $V$ a viscosity supersolution to the HJB equation \eqref{E:HJB} and satisfy the polynomial growth condition and
$U(T,x,s,\ell)\leq V(T,x,s,\ell)$ for all $(x,s,\ell)\in (0,\infty)\times(0,\infty)\times{\mathbb M}$. Then $U\leq V$ on
$[0,T)\times(0,\infty)\times(0,\infty)\times{\mathbb M}$.
\end{theorem}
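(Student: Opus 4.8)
The plan is to prove Theorem~\ref{T:Comparison} by the classical doubling-of-variables technique for viscosity solutions, adapted to the degenerate, coupled, unbounded-domain structure of \eqref{E:HJB}.

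\emph{Step 1: reduction to a strict supersolution.} I would first fix an exponent $p$ strictly larger than the polynomial growth exponent of $U$ and $V$ and set $\psi(t,x,s)=e^{\lambda(T-t)}\big(1+x^{2p}+s^{2p}\big)$. Using Assumption~\ref{ass1} (the linear growth of $\mu,\sigma$ in the price variable, uniform in $\upsilon\in U$), the sign $-\upsilon\,\partial_x\psi\le0$, and $\beta>0$, one checks that for $\lambda$ large (depending only on $\beta$, $K$, $Q$) the smooth function $\psi$ satisfies $\beta\psi-\partial_t\psi-\sup_{\upsilon\in U}\mathcal L^\upsilon\psi-\mathcal Q\psi\ge\psi\ge1$. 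Hence $V^\delta:=V+\delta\psi$ is, for each $\delta>0$, a viscosity supersolution of \eqref{E:HJB} with its left-hand side bounded below by $\delta$, it dominates $U$ strictly at $t=T$ (since $\psi(T,\cdot)>0$ and $U(T,\cdot)\le V(T,\cdot)$), and $U-V^\delta\to-\infty$ as $x+s\to\infty$. It therefore suffices to prove $U\le V^\delta$ for every $\delta>0$ and then let $\delta\downarrow0$.

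\emph{Step 2: doubling and the coupling term.} Arguing by contradiction, suppose $\Theta:=\sup_{\ell\in\mathbb M}\sup_{[0,T)\times(0,\infty)^2}(U-V^\delta)>0$. For $\varepsilon,\eta>0$, maximise over $\ell\in\mathbb M$ (kept fixed, not doubled) and over $(t,x,s),(t',x',s')\in[0,T)\times(0,\infty)^2$ the function
\begin{multline*}
\Phi(t,x,s,t',x',s',\ell)=U(t,x,s,\ell)-V^\delta(t',x',s',\ell)\\
-\frac{1}{2\varepsilon}\big(|t-t'|^2+|x-x'|^2+|s-s'|^2\big)-\eta\Big(\frac1x+\frac1{x'}+\frac1s+\frac1{s'}\Big).
\end{multline*}
The barrier terms exclude a maximiser on the open faces $\{x=0\}$ and $\{s=0\}$, the coercivity of $\delta\psi$ excludes escape to spatial infinity, and the strict terminal inequality excludes $t=T$ or $t'=T$; so a maximiser $(\bar t,\bar x,\bar s,\bar t',\bar x',\bar s',\bar\ell)$ exists in the open domain, with the standard properties $\varepsilon^{-1}(|\bar t-\bar t'|^2+|\bar x-\bar x'|^2+|\bar s-\bar s'|^2)\to0$ and $U(\bar t,\bar x,\bar s,\bar\ell)-V^\delta(\bar t',\bar x',\bar s',\bar\ell)\to$ the penalised supremum as $\varepsilon\downarrow0$. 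Because the quadratic penalty is independent of $\ell$, for every $j\in\mathbb M$ one has $U(\bar t,\bar x,\bar s,j)-V^\delta(\bar t',\bar x',\bar s',j)\le U(\bar t,\bar x,\bar s,\bar\ell)-V^\delta(\bar t',\bar x',\bar s',\bar\ell)$, whence
\begin{multline*}
\mathcal QU(\bar t,\bar x,\bar s,\bar\ell)-\mathcal QV^\delta(\bar t',\bar x',\bar s',\bar\ell)\\
=\sum_{j\neq\bar\ell}q_{\bar\ell j}\Big[\big(U(\bar t,\bar x,\bar s,j)-V^\delta(\bar t',\bar x',\bar s',j)\big)-\big(U(\bar t,\bar x,\bar s,\bar\ell)-V^\delta(\bar t',\bar x',\bar s',\bar\ell)\big)\Big]\le0,
\end{multline*}
so the regime-switching coupling is harmless. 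I would then invoke the Crandall--Ishii maximum principle for semicontinuous functions to obtain, with common time-slope $a=(\bar t-\bar t')/\varepsilon$, gradients $q,q'$ differing only by the (vanishing) barrier gradient, and symmetric matrices $X,Y$ satisfying $X\le Y$ up to an error controlled by $\varepsilon^{-1}(\bar s-\bar s')^2$, that $(a,q,X)$ and $(a,q',Y)$ lie in the closures of the second-order super/sub-jets of $U$ and $V^\delta$; only the $s$-entries of $X,Y$ feed into \eqref{E:HJB}.

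\emph{Step 3: closing the estimate.} Subtracting the sub-solution inequality for $U$ (at $(\bar t,\bar x,\bar s)$) from the strict super-solution inequality for $V^\delta$ (at $(\bar t',\bar x',\bar s')$), using $\sup_{\upsilon}A_\upsilon-\sup_{\upsilon}B_\upsilon\le\sup_{\upsilon}(A_\upsilon-B_\upsilon)$ and the sign of the $\mathcal Q$-difference just established, one reaches an inequality of the form
\begin{multline*}
\beta\big(U(\bar t,\bar x,\bar s,\bar\ell)-V^\delta(\bar t',\bar x',\bar s',\bar\ell)\big)+\delta\ \le\ R_{\varepsilon,\eta}\ +\\
\sup_{\upsilon\in U}\Big\{\tfrac12\sigma^2(\bar t,\bar s,\upsilon,\bar\ell)X_{ss}-\tfrac12\sigma^2(\bar t',\bar s',\upsilon,\bar\ell)Y_{ss}+\mu(\bar t,\bar s,\upsilon,\bar\ell)q_s-\mu(\bar t',\bar s',\upsilon,\bar\ell)q'_s+\phi(\upsilon)(\bar s-\bar s')\Big\},
\end{multline*}
where $R_{\varepsilon,\eta}$ collects the barrier-gradient and barrier-Hessian remainders. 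Testing the matrix inequality against the vector $\big(\sigma(\bar t,\bar s,\upsilon,\bar\ell),\sigma(\bar t',\bar s',\upsilon,\bar\ell)\big)$ bounds the second-order difference by $C\varepsilon^{-1}\big(\sigma(\bar t,\bar s,\upsilon,\bar\ell)-\sigma(\bar t',\bar s',\upsilon,\bar\ell)\big)^2\le CK^2\varepsilon^{-1}(|\bar t-\bar t'|+|\bar s-\bar s'|)^2$, which vanishes with $\varepsilon$, uniformly in $\upsilon\in U$; the first-order term is split as $\mu(q_s-q'_s)+(\mu-\mu')q'_s$ and controlled via the linear growth and Lipschitz bounds of Assumption~\ref{ass1} and $|q'_s|\le\varepsilon^{-1}|\bar s-\bar s'|+O(\eta)$; and $\phi(\upsilon)(\bar s-\bar s')\to0$ because $\phi$ is bounded on the compact set $U$. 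Sending $\varepsilon\downarrow0$ (maximisers converging, along a subsequence, to an interior point depending on $\eta$) and then $\eta\downarrow0$ gives $\beta\Theta+\delta\le0$, contradicting $\Theta>0$. Hence $U\le V^\delta$ for all $\delta>0$, and $\delta\downarrow0$ yields $U\le V$.

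\emph{Main obstacles.} I expect the real difficulties to be twofold. First, the spatial domain $(0,\infty)\times(0,\infty)$ is unbounded and has two open faces on which no equation is imposed, so the maximiser of the doubled function must be confined by the combined effect of the coercive perturbation $\delta\psi$ at infinity and the barriers $\eta(1/x+1/s+\cdots)$ at the faces; the delicate point is to keep these barriers strong enough to trap the maximiser yet verify that the remainder $R_{\varepsilon,\eta}$ — in particular terms of order $\eta/\bar x^{\,2}$ and $\eta\sigma^2(\bar t,\bar s,\upsilon,\bar\ell)/\bar s^{\,3}$ — does tend to $0$ in the iterated limit, which requires a lower bound on $\bar x$ and $\bar s$ recovered by re-examining the viscosity inequalities near the faces (the outflow sign $\dot x=-\upsilon\le0$ of the $x$-dynamics helps at $\{x=0\}$, and the price structure of $\sigma$ at $\{s=0\}$). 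Second, \eqref{E:HJB} is degenerate — second order only in $s$ and pure transport in $x$ and $t$ — so one cannot use smooth test functions simultaneously on both sides of the diffusion term and must pass through the Crandall--Ishii lemma, the crucial feature being that the $(t,s)$-Lipschitz bound on $\sigma$ turns the matrix estimate into an $\varepsilon^{-1}|\bar s-\bar s'|^2$ term. By contrast the regime-switching coupling presents no trouble, because the penalisation is independent of $\ell$ and $Q$ has nonnegative off-diagonal entries.
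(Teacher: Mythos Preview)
Your proposal is correct and follows essentially the same route as the paper: penalise at infinity by adding $\delta$ times a smooth supersolution of polynomial growth, assume for contradiction that the penalised difference has a positive supremum, double the spatial variables with a quadratic penalty while keeping the regime index $\ell$ single and maximised over $\mathbb M$ (so that the $\mathcal Q$-difference has the good sign), apply the Crandall--Ishii lemma, and close via the Lipschitz bounds on $\mu,\sigma$ from Assumption~\ref{ass1}.

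The only noteworthy difference is one of bookkeeping rather than strategy. The paper does not introduce the boundary barriers $\eta(1/x+1/x'+1/s+1/s')$; it simply asserts that the maximum of $U-\widetilde V^\epsilon$ is attained ``(up to a penalization)'' on a compact set $\Sigma_1=[0,T]\times O_1\times O_2$ with $O_1,O_2\Subset(0,\infty)$, and then works directly on $\Sigma_1\times\Sigma_1$. Your explicit barriers make rigorous what the paper leaves implicit, at the cost of the extra remainder $R_{\varepsilon,\eta}$ you correctly flag as the delicate point. Either way the argument goes through; your version is the more careful of the two.
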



The proofs of Theorems \ref{T:Continuity}, \ref{T:solution}, and \ref{T:Comparison} are given in Sections 4, 5, and 6, respectively. The proofs are technical and lengthy as one would expect with the viscosity solution method.  The further complication in the proofs over the standard diffusion model is that we need to deal with the Markov chain process $\alpha$ and its relation with the diffusion process $S$.


\section{A Numerical Example} \label{S:Numerical}

In this section we give a numerical example  to find the approximation of the value function and the  optimal selling strategy. The finite difference method is one of the most common approximation schemes for viscosity solutions due to its well-known consistency, stability, convergence analysis, in particular in the presence of the monotonicity property, see \cite{CL84} for numerical solutions of HJB equation and  \cite{PZ06}  for a regime switching optimal stopping problem which results in a system of HJB variational inequalities.  We may apply the numerical scheme of \cite{PZ06} to solve our optimal liquidation problem.   The numerical example is   to provide  a snapshot of the  optimal trading strategy at a given specific time.

Assume that there are only two regimes. Regime 1 represents the strong economy and regime 2 the weak economy and assume that the stock price $S(r)$
follows a GBM process with
$\mu(r,s,u,\alpha)=\mu(\alpha)s$ and
$\sigma(r,s,u,\alpha)=\sigma(\alpha)s$.
Define  variables $z=\log{s}$ and $\tau=T-t$ and a function
$W(\tau,x,z,\ell)=V(t,x,s,\ell)$. The HJB equation~\eqref{E:HJB} becomes
\begin{align}\label{E:HJB2}
&\beta W(\tau,x,z,\ell)+\frac{\partial W}{\partial\tau}(\tau,x,z,\ell)-\sup_{\upsilon\in U}\Big\{
-\upsilon\frac{\partial W}{\partial x}(\tau,x,z,\ell)+\mu(\ell)\frac{\partial W}{\partial z}(\tau,x,z,\ell)\notag\\
&+\frac{1}{2}\sigma^2(\ell)\left(\frac{\partial^2W}{\partial z^2}(\tau,x,z,\ell)-\frac{\partial W}{\partial z}(\tau,x,z,\ell)\right)
+\phi(\upsilon)e^z\Big\}-{\mathcal Q}W(\tau,x,z,\ell)=0,
\end{align}
with the boundary condition $W(\tau,0,z,\ell)=0$ and the terminal condition
$W(0,x,z,\ell)=g(x)e^z$.

To approximate the solution to \eqref{E:HJB2} we discretize variables $\tau$, $x$ and $z$ with stepsizes $\Delta \tau, \Delta x, \Delta z$, respectively.   The value of  $W$
at a grid point $(\tau_n, x_i, z_j)$ in the regime $\ell$ is denoted by $W^n_{i,j}(\ell)$. The derivatives of $W$ are approximated by
$W_\tau=(W^{n+1}_{i,j}(\ell)-W^{n}_{i,j}(\ell))/\Delta \tau$,
$W_x=(W^{n}_{i+1,j}(\ell)-W^{n}_{i-1,j}(\ell))/(2\Delta x)$,
$W_z=(W^{n}_{i,j+1}(\ell)-W^{n}_{i,j-1}(\ell))/(2\Delta z)$, and
$W_{zz}=(W^{n}_{i,j+1}(\ell)+W^{n}_{i,j-1}(\ell)-2W^{n}_{i,j}(\ell))/\Delta z^2$.
Discretizing equation (\ref{E:HJB2}) and rearranging the terms,
we have
\begin{align}\label{E:HJB3}
W^{n+1}_{i,j}(\ell)=&\Delta \tau\Bigg[\left(-\beta+\frac{1}{\Delta \tau}-q_{\ell\ell'}-\frac{\sigma(\ell)^2}{\Delta z^2}\right)W^{n}_{i,j}(\ell)
+\left(\frac{\mu(\ell)-\frac{1}{2}\sigma(\ell)^2}{2\Delta z}+\frac{\sigma(\ell)^2}{2\Delta z^2}\right)W^{n}_{i,j+1}(\ell)\notag\\
&{}+\left(-\frac{\mu(\ell)-\frac{1}{2}\sigma(\ell)^2}{2\Delta z}+\frac{\sigma(\ell)^2}{2\Delta z^2}\right)W^{n}_{i,j-1}(\ell)+q_{\ell\ell'}W^{n}_{i,j}(\ell')\notag\\
&{}+\sup_{\upsilon\in U}\Big\{-\upsilon\frac{W^{n}_{i+1,j}(\ell)-W^{n}_{i-1,j}(\ell)}{2\Delta x}+\phi(\upsilon)e^z\Big\}\Bigg],
\end{align}
where    $\ell, \ell'=1,2$ and $\ell\ne \ell'$. Assume that the temporary liquidation impact function is given by
$$
\phi(\upsilon)={1\over \alpha}(1-e^{-\alpha\upsilon}),
$$
where $\alpha>0$, and  the block liquidation impact function is given by
\begin{equation}\label{E:piecewise2}
g(x)=\begin{cases}
x,&\text{if $0\leq x\leq5$,}\\
-0.01x^2+1.1x-0.25,&\text{if $5<x\leq15$,}\\
10+0.8(x-10),&\text{if $15<x\leq40$,}\\
-0.0075x^2+1.4x-10,&\text{if $40<x\leq60$,}\\
42+0.5(x-50),&\text{if $x>60$.}
\end{cases}\notag
\end{equation}
Functions $\phi$ and $g$ satisfy Assumption~\ref{ass2}. In fact, $g$ is constructed as a smooth approximation to a function $f$ defined by
\begin{equation}
f(x)=\begin{cases}
x,&\text{if $0\leq x\leq 10$,}\\
10+0.8(x-10),&\text{if $10<x\leq 50$,}\\
42+0.5(x-50),&\text{if $x>50$.}
\end{cases}\notag
\end{equation}
Function $f$ captures the block liquidation effect  at time $T$ but is not differentiable   at $x=10$ and 50 and does not satisfy Assumption~\ref{ass2}.

Data used for numerical tests are $\alpha=0.005$, $\beta=0.01$, $\{\mu(1),\mu(2)\}=\{0.3,-0.1\}$, $\{\sigma(1),\sigma(2)\}=\{0.2,0.4\}$, $q_{12}=0.5$, $q_{21}=1$, $\upsilon\in U=[0,100]$,
$t\in[0,1]$, $x\in[0,100]$, $s\in[e^{-1},e^2]$.

\begin{figure}[htb]
  \centering
  \subfigure[$\ell=1$]{\includegraphics[width=7cm]{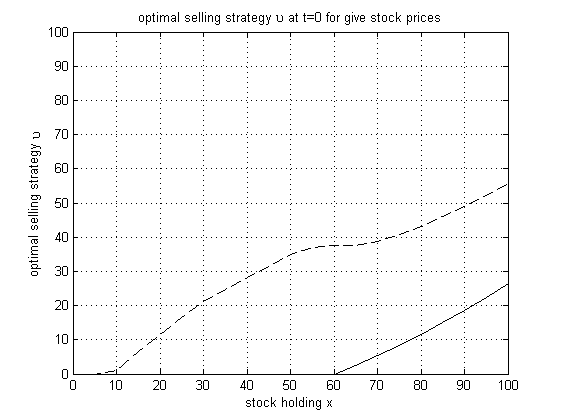}}
  \caption{The optimal control at time $t=0$ against stock holding $x$. The solid line is for regime 1  and the dashed line for regime 2.}\label{fig:2DContr}
\end{figure}

Figure~\ref{fig:2DContr} demonstrates the relationship between the optimal selling strategy and the stock holding. It is clear that the more shares
one holds, the sooner and the more one wants to sell to avoid the potential large transaction cost during the whole period. The market regime determines at what level of stock holding one should start to sell. In a rising market (regime 1) the trader is willing to keep the stock
for a longer period in the hope for a higher price, which results in a lower optimal selling rate, whereas in a falling market (regime 2) the trader wants to liquidate the stock quickly to avoid a lower price. This is consistent with the general market phenomenon. The optimal trading strategy is independent of initial asset price in the numerical test, which is not surprising as the asset price follows a GBM process and depends on the initial asset price linearly. In general, the optimal trading strategy should also depend on the asset price.
The particular shape of the curve
in Figure~\ref{fig:2DContr} is determined by the tradeoff between  function $\phi$ that captures the liquidity effect from 'flow' trading and function
$g$ that reflects the  transaction cost for the block liquidation at the terminal time.
Note that if there is no temporary pricing impact on liquidation, i.e., $\phi(\upsilon)=\upsilon$, then the optimal liquidation strategy is a ``bang-bang'' control with either no trading $\upsilon=0$ or selling at maximum rate $\upsilon=100$ due to the linear dependence of control $\upsilon$ in the Hamiltonian function.


\section{Proof of Theorem~\ref{T:Continuity}} \label{S:Continuity}
We first convert the original control problem into a problem without terminal bequest function. Since function $g$ is continuously
differentiable, we can apply Dynkin's formula to
$e^{\beta(\tau_0-t)}g(X^u_{t,x}(\tau_0))S^u_{t,s,\ell}(\tau_0)$ and rewrite the total payoff $J$ as
$$J(t,x,s,\ell; u)=
g(x)s+E\left[\int_t^{\tau_0}
L(r,X^u_{t,x}(r),S^u_{t,s,\ell}(r),u(r),\alpha_{t,\ell}(r))dr\right],
$$
where
$$
L(r,x,s,\upsilon,\alpha)=e^{-\beta(r-t)}\left[\phi(\upsilon)s-\beta g(x)s-\upsilon g'(x)s+\mu(r,s,\upsilon,\alpha)g(x)\right].
$$

Define a new value function by
$$
\widetilde{V}(t,x,s,\ell)=\sup_{u\in{\mathcal U}}E\left[\int_{t}^{\tau_0}L\left(r,X^u_{t,x}(r),S^u_{t,s,\ell}(r),u(r),\alpha_{t,\ell}(r)\right)dr\right].
$$
Since $V(t,x,s,\ell)=\widetilde{V}(t,x,s,\ell)+g(x)s$, we know
$V(t,x,s,\ell)$ is continuous as long as $\widetilde{V}(t,x,s,\ell)$ is continuous. From now on in this section we work on the value function $\widetilde{V}$.

To prove the continuity of $\tilde{V}$ we need to define some perturbed problems and show their corresponding value functions are continuous and  converge quasi-uniformly to $\widetilde{V}$, which establishes Theorem~\ref{T:Continuity}.

For $0<\epsilon<1$ define the stopping time
$$\tau_\epsilon=\inf\{r\geq t: X_{t,x}^u(r)=-\epsilon\}\wedge T,$$
 which is the
first time $X_{t,x}^u(r)$ exits from $(-\epsilon,\infty)$.
A control process $u=\{u(r)\}_{0\leq r\leq T}$ is admissible if it is progressively measurable and $u(r)\in U_\epsilon(X_{t,x}^u(r))$, where $U_\epsilon(x)=U$ if $x\geq 0$ and
$U_\epsilon(x)=\hat U$, a compact subset of $U$ in $(0,\infty)$, if $x<0$.
The key here is to rule out zero from the
compact set $\hat{U}$ after $X(r)$ reaches zero. The admissible control set is the collection of all admissible controls, denoted by   ${\mathcal U}_\epsilon$. Note that when we only look at the control process before $\tau_0$, the two admissible control sets, ${\mathcal U}$ and ${\mathcal U}_\epsilon$, are the same.

To simplify the notation denote by
$$ L^u_{t,x,s,\ell}(r):=L\left(r,X^u_{t,x}(r),S^u_{t,s,\ell}(r),u(r),\alpha_{t,\ell}(r)\right).$$
Since  $U$ is a compact set in $[0,\infty)$, say $[0,N]$, we know that  $X^u_{t,x}(r)\in [x-NT, x]$ for $t\leq r\leq T$, which implies that $|g(X^u_{t,x}(r))|$ and   $|g'(X^u_{t,x}(r))|$ are bounded by some constant $K_x$ depending on $x$ due to continuity of $g$ and $g'$. Assumptions \ref{ass1} and \ref{ass2}
 imply that, for $t\leq r\leq T$,
\begin{align}\label{E:L-inequality}
|L^u_{t,x,s,\ell}(r)|\leq K_x\left(1+S^u_{t,s,\ell}(r)\right)
\end{align}
and
\begin{equation} \label{ddag}
\left|L^u_{t,x_1,s_1,\ell}(r)-
L^u_{t,x_2,s_2,\ell}(r)\right|
\leq K_{x_1}|S^u_{t,s_1,\ell}(r)-S^u_{t,s_2,\ell}(r)|+K\left(1+S^u_{t,s_2,\ell}(r)\right)|x_1-x_2|
\end{equation}
for some constant $K_{x_1}$ depending on $x_1$.

\begin{remark} \label{rk1}
{\rm
In the proof we need to estimate $|L^u_{t,x,s,\ell}(r)|$ several times for different $x$. One case is that $x=-\epsilon$ for $0<\epsilon<1$. Then $X^u_{t,x}(r)\in [-1-NT, 0]$ and constant $K_x$ can be replaced by a generic constant $K$ independent of $x$. The other case is that $x$ is within a distance $d$ of another point $x_1$. Then $X^u_{t,x}(r)\in [x_1-d-NT, x_1+d]$ and constant $K_x$ can be written as $K_{x_1}$ depending on $x_1$ for all such $x$.
}\end{remark}

For $\epsilon\in (0,1)$ define a perturbed value function by
$$
\widetilde{V}^{\epsilon}(t,x,s,\ell)=\sup_{u\in{\mathcal U}_\epsilon}E\left[\int_t^{\tau_\epsilon}L^u_{t,x,s,\ell}(r)dr\right].
$$

For $\rho>0$ define an auxiliary function
$$
\Gamma^{\epsilon,\rho,u}_{t,x}(r)=\exp\left(-\frac{1}{\rho}\left(X_{t,x}^u(r)+\epsilon\right)^-\right),
$$
where $x^-=\max(0, -x)$. Clearly, we have
 $\Gamma^{\epsilon,\rho,u}_{t,x}(r)\leq 1$ and,
by the definition of the stopping time $\tau_{\epsilon}$,
 $\Gamma^{\epsilon,\rho,u}_{t,x}(r)=1$ for $r\in[t,\tau_{\epsilon}]$.
The auxiliary value function $\widetilde{V}^{\epsilon,\rho}$ is defined by
\begin{align}
\widetilde{V}^{\epsilon,\rho}(t,x,s,\ell)&=\sup_{u\in{\mathcal U}_\epsilon}\widetilde{J}^{\epsilon,\rho}(t,x,s,\ell;u)
:=E\left[\int_t^T\Gamma^{\epsilon,\rho,u}_{t,x}(r)L^u_{t,x,s,\ell}(r)dr\right].\notag
\end{align}
From (\ref{E:L-inequality}) and (\ref{E:ModelSetup6})  we have that
\begin{equation}
|\widetilde{V}^{\epsilon,\rho}(t,x,s,\ell)|\leq
K_x(1+s). \label{dddag}
\end{equation}

\begin{lemma}\label{Lemma2}
 $\widetilde{V}^{\epsilon,\rho}(t,x,s,\ell)$ converges to  $\widetilde{V}(t,x,s,\ell)$ quasi-uniformly as $\rho\to0$ and $\epsilon\to0$.
\end{lemma}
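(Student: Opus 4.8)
The plan is to factor the convergence through the intermediate value function
$$\widetilde{V}^{\epsilon}(t,x,s,\ell)=\sup_{u\in{\mathcal U}_\epsilon}E\left[\int_t^{\tau_\epsilon}L^u_{t,x,s,\ell}(r)\,dr\right],$$
obtained from $\widetilde{V}^{\epsilon,\rho}$ by keeping the truncated horizon $\tau_\epsilon$ but deleting the smooth cutoff $\Gamma^{\epsilon,\rho,u}_{t,x}$. I would prove two estimates, each with a rate that is uniform in $(t,\ell)$ and locally uniform in $(x,s)$: first, $\widetilde{V}^{\epsilon,\rho}\to\widetilde{V}^{\epsilon}$ as $\rho\to0$ for each fixed $\epsilon$; second, $\widetilde{V}^{\epsilon}\to\widetilde{V}$ as $\epsilon\to0$. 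Throughout, write $\delta:=\min\hat U>0$; the fact that $\hat U$ is bounded away from $0$ is used repeatedly, because it forces $X^u_{t,x}$ to be strictly decreasing, at rate at least $\delta$, once it has gone below $0$.

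For the first estimate, fix $u\in{\mathcal U}_\epsilon$ and split $\int_t^T=\int_t^{\tau_\epsilon}+\int_{\tau_\epsilon}^T$. On $[t,\tau_\epsilon]$ we have $\Gamma^{\epsilon,\rho,u}_{t,x}=1$, so the first piece equals $E[\int_t^{\tau_\epsilon}L^u_{t,x,s,\ell}\,dr]$ exactly. On $\{\tau_\epsilon<T\}$ and $r\in[\tau_\epsilon,T]$ one has $X^u_{t,x}(r)\le-\epsilon-\delta(r-\tau_\epsilon)$, hence $(X^u_{t,x}(r)+\epsilon)^-\ge\delta(r-\tau_\epsilon)$ and $\Gamma^{\epsilon,\rho,u}_{t,x}(r)\le e^{-\delta(r-\tau_\epsilon)/\rho}$; combining with $|L^u_{t,x,s,\ell}(r)|\le K_x(1+S^u_{t,s,\ell}(r))$ from \eqref{E:L-inequality} and $E[\sup_{r\in[t,T]}S^u_{t,s,\ell}(r)]\le K(1+s)$ from \eqref{E:ModelSetup6}, the tail piece is bounded above in absolute value by $K_x\,E\bigl[1+\sup_rS^u_{t,s,\ell}(r)\bigr]\int_0^\infty e^{-\delta v/\rho}\,dv\le C_x(1+s)\rho$, uniformly in $u\in{\mathcal U}_\epsilon$. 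Taking suprema yields $|\widetilde{V}^{\epsilon,\rho}(t,x,s,\ell)-\widetilde{V}^{\epsilon}(t,x,s,\ell)|\le C_x(1+s)\rho$.

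For the second estimate I would prove the two one-sided bounds separately. For the lower bound, given $u\in{\mathcal U}$ extend it to $\hat u\in{\mathcal U}_\epsilon$ by setting $\hat u=u$ on $[t,\tau_0]$ and letting $\hat u$ take any value in $\hat U$ afterwards; then $\tau_0^{\hat u}=\tau_0^u=:\tau_0$, $\tau_\epsilon^{\hat u}\le(\tau_0+\epsilon/\delta)\wedge T$, and on $[\tau_0,\tau_\epsilon^{\hat u}]$ the holding stays in $[-\epsilon,0]$, so $|L^{\hat u}_{t,x,s,\ell}(r)|\le K_x(1+S^{\hat u}_{t,s,\ell}(r))$; hence, using \eqref{E:ModelSetup6} again, $\widetilde{V}^{\epsilon}(t,x,s,\ell)\ge E[\int_t^{\tau_0}L^u_{t,x,s,\ell}\,dr]-K_x(\epsilon/\delta)\,E[1+\sup_rS^{\hat u}_{t,s,\ell}(r)]\ge E[\int_t^{\tau_0}L^u_{t,x,s,\ell}\,dr]-C_x(1+s)\epsilon$, and taking the supremum over $u\in{\mathcal U}$ gives $\widetilde{V}^{\epsilon}\ge\widetilde{V}-C_x(1+s)\epsilon$. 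For the upper bound, given $u\in{\mathcal U}_\epsilon$, its restriction to $[t,\tau_0^u]$ lies in ${\mathcal U}$, so $E[\int_t^{\tau_0^u}L^u_{t,x,s,\ell}\,dr]\le\widetilde{V}(t,x,s,\ell)$, and it remains to estimate $E[\int_{\tau_0^u}^{\tau_\epsilon^u}L^u_{t,x,s,\ell}\,dr]$ from above over an interval whose length is not controlled. On that interval $X^u_{t,x}(r)\in[-\epsilon,0]$, and here Assumption~\ref{ass2} enters decisively: since $g$ is concave with $g'(0)=1$ we have $g'(X^u_{t,x}(r))\ge1$, and since $\phi$ is concave with $\phi(0)=0$, $\phi'(0)=1$ we have $\phi(\upsilon)\le\upsilon$ for $\upsilon\ge0$, so the main term satisfies $\phi(\upsilon)s-\upsilon g'(X^u_{t,x}(r))s\le0$; the remaining terms $-\beta g(X^u_{t,x}(r))s+\mu(r,s,\upsilon,\ell)g(X^u_{t,x}(r))$ are controlled by $|g(X^u_{t,x}(r))|\le K\epsilon$ together with Assumption~\ref{ass1}, giving $L^u_{t,x,s,\ell}(r)\le C\epsilon(1+S^u_{t,s,\ell}(r))$ on $[\tau_0^u,\tau_\epsilon^u]$. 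Integrating over an interval of length at most $T$ and applying \eqref{E:ModelSetup6} yields $E[\int_{\tau_0^u}^{\tau_\epsilon^u}L^u_{t,x,s,\ell}\,dr]\le C_x(1+s)\epsilon$, whence $\widetilde{V}^{\epsilon}\le\widetilde{V}+C_x(1+s)\epsilon$.

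Combining the two estimates by the triangle inequality gives $|\widetilde{V}^{\epsilon,\rho}(t,x,s,\ell)-\widetilde{V}(t,x,s,\ell)|\le C_x(1+s)(\rho+\epsilon)$, with a constant independent of $(t,\ell)$ and locally bounded in $(x,s)$; this is the asserted convergence, the double limit being taken by first letting $\rho\to0$ and then $\epsilon\to0$, and it is quasi-uniform (indeed locally uniform). I expect the main obstacle to be the upper-bound half of the second estimate: because a control in ${\mathcal U}_\epsilon$ can keep $X^u_{t,x}$ near $0$ over an interval of order $1$ (by trading at rate near $0$ while the holding is nonnegative), one cannot bound the overshoot contribution by the length of $[\tau_0^u,\tau_\epsilon^u]$, and must instead exploit the concavity of $\phi$ and $g$ to show that the integrand $L^u$ is itself $O(\epsilon)$ from above throughout the region $\{X^u_{t,x}\in[-\epsilon,0]\}$. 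The lower-bound half is easier, since there one chooses the extension of the control inside $\hat U$, which is bounded away from $0$, so the overshoot interval genuinely has length $O(\epsilon)$; the $\rho\to0$ step is routine once the exponential decay of $\Gamma^{\epsilon,\rho,u}_{t,x}$ past $\tau_\epsilon$ is observed.
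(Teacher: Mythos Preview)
Your argument is correct and follows a genuinely different route from the paper's. Both proofs factor through $\widetilde V^\epsilon$, but the two halves are handled differently.

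For the step $\widetilde V^{\epsilon,\rho}\to\widetilde V^\epsilon$, the paper invokes the dynamic programming principle to reduce to the boundary point $x=-\epsilon$, estimates $\widetilde V^{\epsilon,\rho}(t,-\epsilon,s,\ell)$ there, and feeds the result back into the DPP identity. Your direct splitting at $\tau_\epsilon$ together with the exponential bound $\Gamma^{\epsilon,\rho,u}_{t,x}(r)\le e^{-\delta(r-\tau_\epsilon)/\rho}$ achieves the same estimate more elementarily and with the same $O(\rho)$ rate, without appealing to DPP.

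For the step $\widetilde V^\epsilon\to\widetilde V$, the paper uses Cauchy--Schwarz to bound $E\bigl[\int_{\tau_0}^{\tau_\epsilon}L\,dr\bigr]$ by $\sqrt{E[\tau_\epsilon-\tau_0]}\,\sqrt{E[\int_t^T L^2\,dr]}$ and asserts $E[\tau_\epsilon-\tau_0]\le\epsilon/N_0$, obtaining an $O(\sqrt\epsilon)$ rate. You correctly identify the subtlety that a control in $\mathcal U_\epsilon$ may hold $X$ at $0$ over an interval of order one, so the length of $[\tau_0,\tau_\epsilon]$ is not uniformly small; your remedy---using concavity of $\phi$ and $g$ from Assumption~\ref{ass2} to force $\phi(\upsilon)S-\upsilon g'(X)S\le0$ and $|g(X)|\le K\epsilon$ on $\{X\in[-\epsilon,0]\}$, hence $L\le C\epsilon(1+S)$ pointwise on the overshoot region---sidesteps the length bound entirely and yields the sharper rate $O(\epsilon)$. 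The paper's Cauchy--Schwarz route can in fact be repaired by the observation that $L\equiv0$ on the sub-interval where $X=0$ and $u=0$, so only the genuinely short piece $[\tau_0',\tau_\epsilon]$ with $\tau_\epsilon-\tau_0'\le\epsilon/N_0$ contributes; but this is left implicit there, whereas your treatment is explicit and structurally cleaner. The trade-off is that your upper bound leans on the specific form of $L$ and Assumption~\ref{ass2}, while the paper's argument (once patched) is closer to a generic running-cost estimate.
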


\begin{proof}
\textit{Step 1.} Fix a point $(t,x,s)\in[0,T)\times\{-\epsilon\}\times(0,\infty)$.
Since $X^u_{t,x}(t)=-\epsilon$ we have $\tau_{\epsilon}=t$ and for $r> t$
the admissible control $u(r)$ is in a compact set $U_{\epsilon}(x):=[N_0,N]\subset U$ with $N_0>0$, which implies that $X^u_{t,-\epsilon}(r)<-\epsilon$ and
\begin{equation} \label{dag}
\exp\left(-\frac{N}{\rho}(r-t)\right)\leq \Gamma^{\epsilon,\rho,u}_{t,-\epsilon}(r)=\exp\left(-\frac{1}{\rho}\int_{t}^ru(s)ds\right)
\leq \exp\left(-\frac{N_0}{\rho}(r-t)\right)
\end{equation}
and
$\lim_{\rho\to0}\Gamma^{\epsilon,\rho,u}_{t,-\epsilon}(r)=0$.
(\ref{dag}), (\ref{E:L-inequality}) and (\ref{E:ModelSetup6}) imply that, also noting Remark \ref{rk1},
\begin{align}
\widetilde{J}^{\epsilon,\rho}\left(t,-\epsilon,s,\ell;u\right)
&\leq K\int_t^Te^{-{N_0\over \rho}(r-t)}\left(1+E\left[S_{t,s,\ell}^u(r)\right]\right)dr\leq K(1+s){\rho\over N_0}\left(1-e^{-{N_0\over \rho}T}\right).\notag
\end{align}
Similarly, we have
$$
\widetilde{J}^{\epsilon,\rho}\left(t,-\epsilon,s,\ell;u\right)
\geq -K(1+s){\rho\over N}\left(1-e^{-{N\over \rho}T}\right).\notag
$$
Combining the above two inequalities and taking the supremum, we have
$$
-K(1+s){\rho\over N}\left(1-e^{-{N\over \rho}T}\right)\leq
\widetilde{V}^{\epsilon,\rho}\left(t,-\epsilon,s,\ell\right)
\leq K(1+s){\rho\over N_0}\left(1-e^{-{N_0\over \rho}T}\right).
$$

Applying the dynamic programming principle (see \cite{FS06}), for $(t,x,s,\ell)\in[0,T)\times[0,\infty)\times(0,\infty)\times{\mathbb M}$, we have
\begin{align}
\widetilde{V}^{\epsilon,\rho}(t,x,s,\ell)
&=\sup_{u\in{\mathcal U}_\epsilon}E\left[\int_t^{\tau_\epsilon}L^u_{t,x,s,\ell}(r)dr+
e^{-\beta(\tau_\epsilon-t)}\widetilde{V}^{\epsilon,\rho}\left(\tau_\epsilon,-\epsilon,S_{t,s,\ell}^u(\tau_\epsilon),\alpha_{t,\ell}(\tau_\epsilon)\right)\right]\notag\\
&\leq\sup_{u\in{\mathcal U}_\epsilon}E\left[\int_t^{\tau_\epsilon}L^u_{t,x,s,\ell}(r)dr+
K(1+S_{t,s,\ell}^u(\tau_\epsilon)){\rho\over N_0}\left(1-e^{-{N_0\over \rho}T}\right)\right]\notag\\
&\leq\widetilde{V}^\epsilon(t,x,s,\ell)+K(1+s){\rho\over N_0}\left(1-e^{-{N_0\over \rho}T}\right).\notag
\end{align}
Similarly, we have
$$
\widetilde{V}^{\epsilon,\rho}(t,x,s,\ell)
\geq\widetilde{V}^\epsilon(t,x,s,\ell)-K(1+s){\rho\over N}\left(1-e^{-{N\over \rho}T}\right).\notag
$$
The above two inequalities imply that
$\widetilde{V}^{\epsilon,\rho}(t,x,s,\ell)$ converges to $\widetilde{V}^{\epsilon}(t,x,s,\ell)$ quasi-uniformly as $\rho\to0$, independent of $\epsilon$.

\textit{Step 2.} By the definition of the perturbed value function, the Cauchy-Schwartz inequality, (\ref{E:L-inequality}) and (\ref{E:ModelSetup6}), we  have
\begin{align}
\widetilde{V}^\epsilon(t,x,s,\ell)
=&\sup_{u\in{\mathcal U}_\epsilon}E\left[\int_t^{\tau_0}L^u_{t,x,s,\ell}(r)dr+
\int_{\tau_0}^{\tau_\epsilon}L^u_{t,x,s,\ell}(r)dr\right]\notag\\
\leq&\widetilde{V}(t,x,s,\ell)+\sup_{u\in{\mathcal U}_\epsilon}E\left[\int_t^T{\mathbbm 1}_{\{\tau_0<r<\tau_\epsilon\}}
L^u_{t,x,s,\ell}(r)dr\right]\notag\\
\leq&\widetilde{V}(t,x,s,\ell)+\sup_{u\in{\mathcal U}_\epsilon}\sqrt{E[\tau_\epsilon-\tau_0]}\sqrt{E\left[\int_t^T
L^u_{t,x,s,\ell}(r)^2dr\right]}\notag\\
\leq&\widetilde{V}(t,x,s,\ell)+K_{x}(1+s)\left(\frac{\epsilon}{N_0}\right)^{1/2}.\notag
\end{align}
for some constant $K_{x}$ depending on $x$. Similarly, we have
$$
\widetilde{V}^\epsilon(t,x,s,\ell)\geq\widetilde{V}(t,x,s,\ell)-K_{x}(1+s)\left(\frac{\epsilon}{N_0}\right)^{1/2}.
$$

As $\epsilon\to0$, $\widetilde{V}^\epsilon(t,x,s,\ell)$ converges to $\widetilde{V}(t,x,s,\ell)$ quasi-uniformly. Combining the results of
Steps 1 and  2, we conclude that $\widetilde{V}^{\epsilon,\rho}(t,x,s,\ell)$ converges to $\widetilde{V}(t,x,s,\ell)$ quasi-uniformly
as $\rho\to0$ and $\epsilon\to0$.
\end{proof}

\begin{lemma}\label{Lemma2a}
$\widetilde{V}^{\epsilon,\rho}(\cdot,\cdot,\cdot,\ell)$ is continuous on $[0,T]\times[0,\infty)\times(0,\infty)$
for $\ell\in{\mathbb M}$ and arbitrary constants $\epsilon>0$ and $\rho>0$.
\end{lemma}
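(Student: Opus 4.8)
The plan is to prove joint continuity of $\widetilde V^{\epsilon,\rho}(\cdot,\cdot,\cdot,\ell)$ by establishing three separate moduli of continuity — in $s$, in $t$, and in $x$ — each locally uniform in the remaining variables, and then combining them by a triangle inequality; the inputs are the Lipschitz‑type bounds \eqref{E:L-inequality}--\eqref{ddag} on $L$, the stock‑price estimates \eqref{E:ModelSetup6}--\eqref{E:StockPriceContinuity}, the a~priori bound \eqref{dddag}, and the dynamic programming principle (see \cite{FS06}). Continuity in $s$ is immediate: since the admissible class $\mathcal U_\epsilon$ attached to $(t,x)$ does not involve $s$, one compares the payoffs $\widetilde J^{\epsilon,\rho}(t,x,s_1,\ell;u)$ and $\widetilde J^{\epsilon,\rho}(t,x,s_2,\ell;u)$ of a common control, uses $0\le\Gamma^{\epsilon,\rho,u}_{t,x}(r)\le1$ together with \eqref{ddag} (whose $x$‑term drops out when the holdings agree) and \eqref{E:StockPriceContinuity}, and takes the supremum, obtaining $|\widetilde V^{\epsilon,\rho}(t,x,s_1,\ell)-\widetilde V^{\epsilon,\rho}(t,x,s_2,\ell)|\le K_x|s_1-s_2|$.

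For continuity in $t$ I would use the dynamic programming principle for the auxiliary problem, which is a genuine finite‑horizon stochastic control problem with the Markovian running cost $\Gamma^{\epsilon,\rho,u}_{t,x}(r)L^u_{t,x,s,\ell}(r)$. For $t_1<t_2$ with $t_2-t_1<\epsilon/N$ (so that $\Gamma^{\epsilon,\rho,u}_{t_1,x}\equiv1$ on $[t_1,t_2]$ whenever $x\ge0$) the principle reads
\[ \widetilde V^{\epsilon,\rho}(t_1,x,s,\ell)=\sup_{u\in\mathcal U_\epsilon}E\Bigl[\int_{t_1}^{t_2}L^u_{t_1,x,s,\ell}(r)\,dr+e^{-\beta(t_2-t_1)}\widetilde V^{\epsilon,\rho}\bigl(t_2,X^u_{t_1,x}(t_2),S^u_{t_1,s,\ell}(t_2),\alpha_{t_1,\ell}(t_2)\bigr)\Bigr], \]
and, subtracting $\widetilde V^{\epsilon,\rho}(t_2,x,s,\ell)$, I would estimate term by term: the running integral by $O(t_2-t_1)$ via \eqref{E:L-inequality} and \eqref{E:ModelSetup6}; the discount gap $1-e^{-\beta(t_2-t_1)}$ by $\beta(t_2-t_1)$ against \eqref{dddag}; the change of the $x$‑argument, of size $|X^u_{t_1,x}(t_2)-x|\le N(t_2-t_1)$, against the $x$‑modulus; the change of the $s$‑argument, with $E|S^u_{t_1,s,\ell}(t_2)-s|\le K(1+s)(t_2-t_1)^{1/2}$ from \eqref{E:ModelSetup7}, against the $s$‑modulus; and the change of the Markov state against \eqref{dddag} and $P(\alpha_{t_1,\ell}(t_2)\ne\ell)\le K(t_2-t_1)$ (from \eqref{E:ModelSetup1}) by the Cauchy--Schwarz inequality. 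Since all the relevant holdings lie in $[x-1,x+1]$ when $t_2-t_1$ is small, the spatial moduli used here are uniform (cf.\ Remark~\ref{rk1}), and one obtains $|\widetilde V^{\epsilon,\rho}(t_1,x,s,\ell)-\widetilde V^{\epsilon,\rho}(t_2,x,s,\ell)|\le K_x(1+s)|t_1-t_2|^{1/2}$; in particular $\widetilde V^{\epsilon,\rho}$ extends continuously to $t=T$ with value $0$ there.

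The substantive step, and the one I expect to be the main obstacle, is continuity in $x$: unlike in the unconstrained case the admissible class $\mathcal U_\epsilon$ genuinely depends on the initial holding through the trajectory $X^u_{t,x}$, so one cannot simply compare a common near‑optimal control in both directions. One direction is still easy: for $x_1\ge x_2$ every $(t,x_2)$‑admissible control is $(t,x_1)$‑admissible (since $X^u_{t,x_1}(r)<0$ forces $X^u_{t,x_2}(r)<-(x_1-x_2)<0$), and along such a common control $|\Gamma^{\epsilon,\rho,u}_{t,x_1}(r)-\Gamma^{\epsilon,\rho,u}_{t,x_2}(r)|\le\rho^{-1}|x_1-x_2|$ because $X^u_{t,x_1}(r)-X^u_{t,x_2}(r)\equiv x_1-x_2$ and $y\mapsto(y+\epsilon)^-$ is $1$‑Lipschitz; with \eqref{E:L-inequality} and \eqref{ddag} this yields $\widetilde V^{\epsilon,\rho}(t,x_1,s,\ell)\ge\widetilde V^{\epsilon,\rho}(t,x_2,s,\ell)-K_x\rho^{-1}(1+s)|x_1-x_2|$. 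For the reverse inequality one must transfer an $\eta$‑optimal $(t,x_1)$‑control $u$ to a $(t,x_2)$‑admissible control of comparable payoff; the natural surrogate follows $u$ until the inventory drops to $x_1-x_2$ and then liquidates at the minimal rate of $\hat U$, and the delicate point — which I single out as the main technical difficulty — is to show that the payoff discrepancy on the resulting terminal block, where the holding is small or negative, tends to zero as $x_1-x_2\to0$, even though a constrained liquidation there is unavoidable once the inventory is exhausted. This is exactly where the structural hypotheses of Assumption~\ref{ass2} ($g(0)=\phi(0)=0$, $\phi'(0)=1$), which force the integrand $L$ to vanish or to be $O(|x_1-x_2|)$ while the holding is small, and the smoothing factor $\Gamma^{\epsilon,\rho,u}$, which decays exponentially at a rate bounded below on $\hat U$ once the holding is below $-\epsilon$ and so damps the remaining tail, are essential. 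With the three moduli in hand, joint continuity of $\widetilde V^{\epsilon,\rho}(\cdot,\cdot,\cdot,\ell)$ on $[0,T]\times[0,\infty)\times(0,\infty)$ follows at once.
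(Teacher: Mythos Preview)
Your overall architecture — Lipschitz continuity in the spatial variables directly, then continuity in $t$ via the dynamic programming principle using the spatial moduli, the bound \eqref{dddag}, and $P[\alpha_{t_1,\ell}(t_2)\neq\ell]\to0$ — is exactly the paper's two–step scheme. The paper in fact treats $(x,s)$ jointly in a single stroke: it invokes $|\sup A-\sup B|\le\sup|A-B|$ over a \emph{common} control class, adds and subtracts $\Gamma^{\epsilon,\rho,u}_{t,x_2}L^u_{t,x_1,s_1,\ell}$, and uses precisely the Lipschitz estimate
\[
\bigl|\Gamma^{\epsilon,\rho,u}_{t,x_1}(r)-\Gamma^{\epsilon,\rho,u}_{t,x_2}(r)\bigr|\le\rho^{-1}|x_1-x_2|
\]
together with \eqref{E:L-inequality}, \eqref{ddag} and \eqref{E:StockPriceContinuity} to get $|\widetilde V^{\epsilon,\rho}(t,x_1,s_1,\ell)-\widetilde V^{\epsilon,\rho}(t,x_2,s_2,\ell)|\le K_{x_1,s_1}(|x_1-x_2|+|s_1-s_2|)$ in one line. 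Your elaborate control–surgery program for $x$-continuity is therefore not what the paper does; the paper simply treats $\mathcal U_\epsilon$ as if it were the same set for both starting points. You are right that, strictly speaking, the state constraint $u(r)\in U_\epsilon(X^u_{t,x}(r))$ makes $\mathcal U_\epsilon$ depend on $x$, and the monotone inclusion $\mathcal U_\epsilon^{x_2}\subset\mathcal U_\epsilon^{x_1}$ for $x_1\ge x_2$ only delivers one inequality from $|\sup-\sup|\le\sup|\cdot|$; the paper glosses over this, and your more careful treatment is a genuine refinement rather than a deviation. Two small points: first, your exposition presents $t$-continuity before the $x$-modulus that it consumes, so the logical order should be $s$, then $x$, then $t$; second, in your surrogate construction the tail after the inventory reaches zero produces contributions of size governed by the fixed parameters $\epsilon,\rho$ (the $\Gamma$ factor only starts to damp after a further drop of $\epsilon$), so you will need to compare those two tails against each other rather than bound each separately by $O(|x_1-x_2|)$.
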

\begin{proof}
\textit{Step 1.} Let $(x_1,s_1), (x_2,s_2)\in [0,\infty)\times(0,\infty)$ satisfying $|x_2-x_1|\leq 1$ and $|s_2-s_1|\leq 1$ and  $t\in[0,T]$ and $\ell\in\mathbb{M}$. Consider the auxiliary value functions
$\widetilde{V}^{\epsilon,\rho}(t,x_1,s_1,\ell)$ and $\widetilde{V}^{\epsilon,\rho}(t,x_2,s_2,\ell)$.

Since $|e^{-a}-e^{-b}|\leq|a-b|$ for any $a,b\geq 0$, we have
\begin{equation} \label{E:Theorem1_3.2}
\left|\Gamma^{\epsilon,\rho,u}_{t,x_1}(r)-\Gamma^{\epsilon,\rho,u}_{t,x_2}(r)\right|\leq\frac{1}{\rho}\left| (X_{t,x_1}^u(r)+\epsilon)^--(X_{t,x_2}^u(r)+\epsilon)^-\right|\leq \frac{1}{\rho}\left|x_1-x_2\right|.
\end{equation}

By the definition of $\widetilde{V}^{\epsilon,\rho}$ and the relation $|\sup A - \sup B| \leq \sup |A-B|$ we have
\begin{align}
&\left|\widetilde{V}^{\epsilon,\rho}(t,x_1,s_1,\ell)-\widetilde{V}^{\epsilon,\rho}(t,x_2,s_2,\ell)\right|\notag\\
\leq&\sup_{u\in{\mathcal U}_\epsilon}E\Bigg[\int_t^T\left|\Gamma^{\epsilon,\rho,u}_{t,x_1}(r)L^u_{t,x_1,s_1,\ell}(r)-\Gamma^{\epsilon,\rho,u}_{t,x_2}(r)L^u_{t,x_2,s_2,\ell}(r)\right|dr\Bigg]\notag\\
\leq&\sup_{u\in{\mathcal U}_\epsilon}E\Bigg[\int_t^T(\left|L^u_{t,x_1,s_1,\ell}(r)
-L^u_{t,x_2,s_2,\ell}(r)\right|
+ \left|L^u_{t,x_2,s_2,\ell}(r)\left(\Gamma^{\epsilon,\rho,u}_{t,x_1}(r)-\Gamma^{\epsilon,\rho,u}_{t,x_2}(r)\right)\right|) dr\Bigg]\notag\\
\leq&K_{x_1}|s_1-s_2|+K\left(1+s_2\right)|x_1-x_2|+\frac{1}{\rho}|x_1-x_2|K_{x_1}\left(1+s_2\right)\notag\\
\leq&K_{x_1, s_1}(|x_1-x_2|+|s_1-s_2|),\label{star}
\end{align}
where $K_{x_1,s_1}$ is some constant depending on $x_1$ and $s_1$. In the second last inequality we have used (\ref{ddag}),  (\ref{E:StockPriceContinuity}), (\ref{E:ModelSetup6}),
(\ref{E:L-inequality}), (\ref{E:Theorem1_3.2}) and Remark \ref{rk1}.
This shows that
the auxiliary value function $\widetilde{V}^{\epsilon,\rho}(t,x,s,\ell)$ is continuous in $(x,s)$, uniformly in $t$.

\textit{Step 2.} We  prove that the auxiliary value function $\widetilde{V}^{\epsilon,\rho}$ is continuous in $t$. Let
$0\leq t_1<t_2\leq T$ and $(x,s,\ell)\in[0,\infty)\times(0,\infty)\times{\mathbb M}$. By the dynamic programming principle, for any $\delta>0$, there exists an admissible  control $u_\delta\in{\mathcal U}_\epsilon$ such that
\begin{align}
&\widetilde{V}^{\epsilon,\rho}(t_1,x,s,\ell)-\delta\notag\\
\leq&E\left[\int_{t_1}^{t_2}\Gamma^{\epsilon,\rho,u_\delta}_{t_1,x}(r)L^{u_\delta}_{t_1,x,s,\ell}(r)dr
+e^{-\beta(t_2-t_1)}\widetilde{V}^{\epsilon,\rho}(t_2,X_{t_1,x}^{u_\delta}(t_2),S_{t_1,s,\ell}^{u_\delta}(t_2),\alpha_{t_1,\ell}(t_2))\right]\notag\\
\leq&\widetilde{V}^{\epsilon,\rho}(t_1,x,s,\ell).\notag
\end{align}
Rearranging the above inequalities,  we have
\begin{align}
&\left|\widetilde{V}^{\epsilon,\rho}(t_1,x,s,\ell)-\widetilde{V}^{\epsilon,\rho}(t_2,x,s,\ell)\right|-\delta\notag\\
\leq&\left|E\left[\int_{t_1}^{t_2}\Gamma^{\epsilon,\rho,u_\delta}_{t_1,x}(r)L^{u_\delta}_{t_1,x,s,\ell}(r)dr
+e^{-\beta(t_2-t_1)}
\widetilde{V}^{\epsilon,\rho}(t_2,X_{t_1,x}^{u_\delta}(t_2),S_{t_1,s,\ell}^{u_\delta}(t_2),\alpha_{t_1,\ell}(t_2))\right]\right.-\widetilde{V}^{\epsilon,\rho}(t_2,x,s,\ell)\bigg|\notag\\
\leq&E\left[\int_{t_1}^{t_2}\left|L^{u_\delta}_{t_1,x,s,\ell}(r)\right|dr\right]+E\left[\left|e^{-\beta(t_2-t_1)}\widetilde{V}^{\epsilon,\rho}(t_2,X_{t_1,x}^{u_\delta}(t_2),S_{t_1,s,\ell}^{u_\delta}(t_2),\ell)
-\widetilde{V}^{\epsilon,\rho}(t_2,x,s,\ell)\right|\right]\notag\\
&+E\left[\left|\widetilde{V}^{\epsilon,\rho}(t_2,X_{t_1,x}^{u_\delta}(t_2),S_{t_1,s,\ell}^{u_\delta}(t_2),\alpha_{t_1,\ell}(t_2))
-\widetilde{V}^{\epsilon,\rho}(t_2,X_{t_1,x}^{u_\delta}(t_2),S_{t_1,s,\ell}^{u_\delta}(t_2),\ell)\right|\right]\notag\\
=&I_1+I_2+I_3.\notag
\end{align}

 (\ref{E:L-inequality}) and (\ref{E:ModelSetup6}) imply that
\begin{align}
I_1\leq&K_x(1+s) (t_2-t_1).\notag
\end{align}
(\ref{dddag}) and Remark \ref{rk1} imply that
$$
E\left[\widetilde{V}^{\epsilon,\rho}\left(t_2,X_{t_1,x}^{u_\delta}(t_2),S_{t_1,s,\ell}^{u_\delta}(t_2),\alpha_{t_1,\ell}(t_2)\right)\right]
\leq E\left[K_x\left(1+S_{t_1,s,\ell}^{u_\delta}(t_2)\right)\right]\leq K_{x,s}
$$
for some constant $K_{x,s}$ depending on $x$ and $s$. Noting that the term inside the expectation of $I_3$ is zero when $\alpha_{t_1,\ell}(t_2)=\ell$, using Cauchy-Schwartz inequality and combining the above inequality, we have
\begin{align}
I_3
\leq&K_{x,s}\sqrt{P\left[\alpha_{t_1,\ell}(t_2)\neq\ell\right]}.\notag
\end{align}
Using (\ref{star})  and \eqref{E:ModelSetup7}, we have
\begin{align}
I_2
\leq&E\left[\left|\widetilde{V}^{\epsilon,\rho}\left(t_2,X_{t_1,x}^{u_\delta}(t_2),S_{t_1,s,\ell}^{u_\delta}(t_2),\ell\right)
-\widetilde{V}^{\epsilon,\rho}\left(t_2,x,s,\ell\right)\right|+\left|(e^{-\beta(t_2-t_1)}-1)\widetilde{V}^{\epsilon,\rho}(t_2,x,s,\ell)\right|\right]\notag\\
\leq&K_{x,s}(E\left[\left|X_{t_1,x}^{u_\delta}(t_2)-x\right|\right]+E\left[\left|S_{t_1,s,\ell}^{u_\delta}(t_2)-s\right|\right])+
E\left[\widetilde{V}^{\epsilon,\rho}(t_2,x,s,\ell)\right]\left|e^{-\beta(t_2-t_1)}-1\right|\notag\\
\leq&K_{x,s}\left((t_2-t_1)+(t_2-t_1)^{1/2}+\left|e^{-\beta(t_2-t_1)}-1\right|\right)\notag
\end{align}
for some constant $K_{x,s}$ depending on $x,s$.
The above estimates for $I_1, I_2,I_3$ show that they all tend to 0 as $t_2- t_1$ tends to 0, independent of  $\delta$ and control $u_\delta$ but dependent on $x$ and $s$. Therefore,
$$\left|\widetilde{V}^{\epsilon,\rho}(t_1,x,s,\ell)-\widetilde{V}^{\epsilon,\rho}(t_2,x,s,\ell)\right|-\delta \to 0
\mbox{ as } t_2-t_1\to 0.$$
The arbitrariness of $\delta$ confirms that $\widetilde{V}^{\epsilon,\rho}(t,x,s,\ell)$  is
 continuous in $t$.

Combining the results of Steps 1 and  2, we conclude that $\widetilde{V}^{\epsilon,\rho}(\cdot,\cdot,\cdot,\ell)$ is continuous
in $(t,x,s)$ for each $\ell\in{\mathbb M}$.
\end{proof}

By Lemmas \ref{Lemma2} and \ref{Lemma2a}, the auxiliary value function $\widetilde{V}^{\epsilon,\rho}(t,x,s,\ell)$ converges quasi-uniformly
to the value function $\widetilde{V}(t,x,s,\ell)$ as $\epsilon\to0$ and $\rho\to0$ and $\widetilde{V}^{\epsilon,\rho}(t,x,s,\ell)$ is continuous in $(t,x,s)$,  which shows that
$\widetilde{V}(t,x,s,\ell)$ is continuous on $[0,T]\times[0,\infty)\times(0,\infty)$ for each $\ell\in{\mathbb M}$. We have proved Theorem~\ref{T:Continuity}.

\section{Proof of Theorem \ref{T:solution}}
We first show that $V$ is a viscosity supersolution.

\begin{theorem} \label{T:supersolution}
Given Assumption~\ref{ass1}, the value function $V=\left\{V(t,x,s,\ell)\right\}_{\ell\in{\mathbb M}}$ is a viscosity supersolution
of the HJB equation \eqref{E:HJB}.
\end{theorem}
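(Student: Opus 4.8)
The plan is to establish that $V$ is a viscosity supersolution of \eqref{E:HJB} by the standard route: use the dynamic programming principle (DPP) on the original value function $V$, pick an arbitrary test function $\varphi\in C^{1,1,2}$ touching $V$ from below at $(\bar t,\bar x,\bar s)$, apply Dynkin's formula to $e^{-\beta(r-\bar t)}\varphi$ along the controlled process, and let the time increment go to zero. Because we want the supersolution inequality (the ``$\geq 0$'' one), it suffices to use the easy (sub-optimality) direction of the DPP: for \emph{every} admissible control $u$ and every stopping time $\theta$,
\begin{equation*}
V(\bar t,\bar x,\bar s,\ell)\;\geq\;E\!\left[\int_{\bar t}^{\theta\wedge\tau_0}e^{-\beta(r-\bar t)}\phi(u(r))S^u(r)\,dr+e^{-\beta(\theta\wedge\tau_0-\bar t)}V\big(\theta\wedge\tau_0,X^u(\theta\wedge\tau_0),S^u(\theta\wedge\tau_0),\alpha(\theta\wedge\tau_0)\big)\right].
\end{equation*}

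\textbf{Main steps.} First I would fix $\ell\in\mathbb M$, $\varphi\in C^{1,1,2}([0,T)\times(0,\infty)\times(0,\infty))$ and $(\bar t,\bar x,\bar s)\in[0,T)\times(0,\infty)\times(0,\infty)$ at which $V(\cdot,\cdot,\cdot,\ell)-\varphi$ attains a \emph{minimum}, normalized so that $V(\bar t,\bar x,\bar s,\ell)=\varphi(\bar t,\bar x,\bar s)$, so $V\geq\varphi$ locally. Second, I would choose the constant control $u(r)\equiv\upsilon$ for an arbitrary $\upsilon\in U$, and take $\theta=\bar t+h$ for small $h>0$; since $\bar x>0$, for $h$ small we have $\theta<\tau_0$ on the event that the path stays in the neighborhood, and this is handled by also stopping at the exit time of a small ball around $(\bar t,\bar x,\bar s)$. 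Replacing $V$ by $\varphi$ at the terminal point (using $V\geq\varphi$) weakens the DPP inequality to
\begin{equation*}
\varphi(\bar t,\bar x,\bar s)\;\geq\;E\!\left[\int_{\bar t}^{\theta}e^{-\beta(r-\bar t)}\phi(\upsilon)S^u(r)\,dr+e^{-\beta(\theta-\bar t)}\varphi\big(\theta,X^u(\theta),S^u(\theta)\big)\right].
\end{equation*}
Third, I would apply Dynkin's formula to $e^{-\beta(r-\bar t)}\varphi(r,X^u(r),S^u(r))$ — noting that the Markov chain $\alpha$ jumps, so the formula picks up the compensator term $\mathcal Q V$ (here $\mathcal Q\varphi$ is not available since $\varphi$ depends only on $(t,x,s)$; instead the jump term is $\sum_{j\neq\ell}q_{\ell j}(V(r,X^u,S^u,j)-\varphi(r,X^u,S^u))$, which on the small interval is close to $\mathcal Q V(\bar t,\bar x,\bar s,\ell)$ by continuity of $V$, Theorem~\ref{T:Continuity}). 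This yields
\begin{equation*}
0\;\geq\;E\!\left[\int_{\bar t}^{\theta}e^{-\beta(r-\bar t)}\Big(-\beta\varphi+\partial_t\varphi+\mathcal L^{\upsilon}\varphi+\phi(\upsilon)S^u(r)+\textstyle\sum_{j\neq\ell}q_{\ell j}(V(\cdot,j)-\varphi)\Big)\,dr\right].
\end{equation*}
Fourth, divide by $h$, let $h\to 0$, and use the mean-value theorem / dominated convergence together with continuity of the integrand (relying on Assumption~\ref{ass1} for $\mu,\sigma$, Assumption~\ref{ass2} for $\phi$, estimates \eqref{E:ModelSetup6}--\eqref{E:ModelSetup7}, smoothness of $\varphi$, and continuity of $V$ for the $\mathcal Q$ term) to obtain
\begin{equation*}
0\;\geq\;-\beta\varphi(\bar t,\bar x,\bar s)+\partial_t\varphi(\bar t,\bar x,\bar s)+\mathcal L^{\upsilon}\varphi(\bar t,\bar x,\bar s)+\phi(\upsilon)\bar s+\mathcal Q V(\bar t,\bar x,\bar s,\ell).
\end{equation*}
Finally, since $\upsilon\in U$ was arbitrary, take the supremum over $\upsilon$ to get the required supersolution inequality \eqref{E:Viscosity_1} with ``$\geq 0$''.

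\textbf{Main obstacle.} The delicate point is the treatment of the Markov chain jump term: because the test function $\varphi$ does not carry an $\ell$-argument, Dynkin's formula for the joint $(S,X,\alpha)$ process generates jump contributions involving the true value function $V$ at other regimes $j\neq\ell$, not $\varphi$. One must argue that over the shrinking interval $[\bar t,\theta]$ the probability of a jump is $O(h)$ and, conditioning on at most one jump, the quantity $V(r,X^u(r),S^u(r),j)-\varphi(r,X^u(r),S^u(r))$ converges (after dividing by $h$ and taking $h\to 0$) to $\sum_{j\neq\ell}q_{\ell j}(V(\bar t,\bar x,\bar s,j)-V(\bar t,\bar x,\bar s,\ell))=\mathcal Q V(\bar t,\bar x,\bar s,\ell)$; this uses continuity of $V$ (Theorem~\ref{T:Continuity}) and the moment bounds \eqref{E:ModelSetup6} to dominate. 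A secondary technicality is ensuring $\theta<\tau_0$ locally, which is immediate since $\bar x>0$ and $X^u$ is continuous and decreasing at bounded rate, so we localize by also stopping at the first exit from a small neighborhood; on that event the DPP terminal value is genuinely $V$ evaluated at an interior point where $\varphi$ is defined, and the contribution of the boundary term vanishes after dividing by $h$.
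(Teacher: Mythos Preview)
Your overall strategy coincides with the paper's: fix $\ell$, a test function $\varphi$ with $V(\cdot,\ell)-\varphi$ minimal at $(\bar t,\bar x,\bar s)$, take a constant control $\upsilon$, use the suboptimality direction of the DPP on a small time interval localized near $(\bar t,\bar x,\bar s)$, apply Dynkin, divide by $h$, and let $h\to 0$. That part is fine.

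The gap is precisely in the place you flag as the ``main obstacle,'' and your displayed steps are internally inconsistent with your own discussion of it. In your step~3 you replace $V(\theta,X^u(\theta),S^u(\theta),\alpha(\theta))$ by $\varphi(\theta,X^u(\theta),S^u(\theta))$ for \emph{all} outcomes, but the inequality $V\geq\varphi$ is only available at regime $\ell$; on $\{\alpha(\theta)\neq\ell\}$ you have no comparison. Then in your step~5 you apply Dynkin to $e^{-\beta(r-\bar t)}\varphi(r,X^u(r),S^u(r))$ and assert that ``the formula picks up the compensator term $\mathcal QV$''. It does not: since $\varphi$ carries no $\alpha$-argument, its generator has no jump part at all, and what you have written down in the displayed inequality after step~4 has simply lost the regime information. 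Your ``obstacle'' paragraph sketches a possible repair via conditioning on a single jump, but this is not what the preceding displays implement.

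The paper handles this cleanly with two devices you omit. First, it introduces the hybrid function
\[
\psi(t,x,s,i)=\begin{cases}\varphi(t,x,s)&i=\ell,\\ V(t,x,s,i)&i\neq\ell,\end{cases}
\]
and applies Dynkin's formula to $e^{-\beta(r-\bar t)}\psi$. Second, it stops not only at $\bar t+h$ and at the exit from a spatial ball, but also at $\hat\tau_1$, the \emph{first jump time} of $\alpha$. On $[\bar t,\tau)$ the chain is frozen at $\ell$, so the diffusion part of the generator acts only on $\psi(\cdot,\ell)=\varphi$ (which is smooth), while the jump compensator produces exactly $\mathcal Q\psi(\cdot,\ell)=\sum_{j\neq\ell}q_{\ell j}\bigl(V(\cdot,j)-\varphi\bigr)$. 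Since $\varphi\leq V(\cdot,\ell)$, one gets $\mathcal Q\psi(\cdot,\ell)\geq\mathcal QV(\cdot,\ell)$, which, after substituting into the DPP inequality and dividing by $h$, yields the supersolution inequality. The remaining step---showing that the event $\{\hat\tau_1\wedge\hat\tau_2\leq\bar t+h\}$ contributes $o(1)$ after division by $h$---is the routine estimate you already anticipate. In short: your route is the right one, but you need the $\psi$ construction (or an equivalent splitting on $\{\alpha(\theta)=\ell\}$ versus $\{\alpha(\theta)\neq\ell\}$) and the stop at the first jump time to make the regime-switching term come out correctly; as written, steps~3--5 do not produce $\mathcal QV$.
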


\begin{proof}
Let $\ell\in{\mathbb M}$, $(\bar{t},\bar{x},\bar{s})\in[0,T)\times(0,\infty)\times(0,\infty)$. Let the test function
$\varphi(t,x,s)\in C^{1,1,2}([0,T)\times(0,\infty)\times(0,\infty))$ such that $V(t,x,s,\ell)-\varphi(t,x,s)$ attains
its minimum at $(\bar{t},\bar{x},\bar{s})$ and, without loss of generality, $V(\bar{t},\bar{x},\bar{s},\ell)-\varphi
(\bar{t},\bar{x},\bar{s})=0$. Choose a constant control $\bar{u}(t)\equiv\upsilon\in U$ for $t\in[0,\tau_0]$. Let the state variables $X$ and $S$
start from time $\bar{t}$ with initial values $\bar{x}$ and $\bar{s}$.

Define $\hat{\tau}_1$ as the first jump time of the regime $\alpha_{\bar{t},\ell}(\cdot)$. Without loss of generality, assume that $\eta$ is small enough such that
$B_\eta(\bar{x},\bar{s})\subset(0,\infty)\times(0,\infty)$. Define $\hat{\tau}_2$ by
$$
\hat{\tau}_2:=\inf\left\{r\geq\bar{t}:\left(X_{\bar{t},\bar{x}}^{\bar u}(r),S_{\bar{t},\bar{s},\ell}^{\bar u}(r)\right)\not\in B_\eta\left(\bar{x},\bar{s}
\right)\right\}.
$$

For $h<T-\bar{t}$, define the stopping time $\tau:=(\bar{t}+h)\wedge\hat{\tau}_1\wedge\hat{\tau}_2$. Note that $\tau<\tau_0$. By dynamic programming principle,
\begin{equation}\label{E:Supersolution_2}
V(\bar{t},\bar{x},\bar{s},\ell)\geq E\left[\int_{\bar{t}}^{\tau}e^{-\beta(r-\bar{t})}\phi\left(\bar{u}(r)\right)S_{\bar{t},\bar{s},\ell}^{\bar u}(r)dr
+e^{-\beta(\tau-\bar{t})}V\left(\tau,X_{\bar{t},\bar{x}}^{\bar u}(\tau),S_{\bar{t},\bar{s},\ell}^{\bar u}(\tau),\alpha_{\bar{t},\ell}(\tau)\right)\right].
\end{equation}

Define
\begin{align}\label{E:Supersolution_3}
\psi(t,x,s,i)=\begin{cases}
\varphi(t,x,s)&\text{if $i=\ell$},\\
V(t,x,s,i) &\text{if $i\neq\ell$}.
\end{cases}
\end{align}
Applying Dynkin's formula at point $(\bar{t},\bar{x},\bar{s},\ell)$, also noting $\psi(t,x,s,\ell)=\varphi(t,x,s)$,
  we have
\begin{align}
&E\left[e^{-\beta(\tau-\bar{t})}\psi\left(\tau,X_{\bar{t},\bar{x}}^{\bar u}(\tau),S_{\bar{t},\bar{s},\ell}^{\bar u}(\tau)
,\alpha_{\bar{t},\ell}(\tau)\right)\right]\notag\\
=&\varphi\left(\bar{t},\bar{x},\bar{s}\right)+
E\left[\int_{\bar{t}}^{\tau}\left\{(-\beta)e^{-\beta(r-\bar{t})}\varphi
\left(r,X_{\bar{t},\bar{x}}^{\bar u}(r),S_{\bar{t},\bar{s},\ell}^{\bar u}(r)\right)\right.\right.\notag\\
&\left.\left.+e^{-\beta(r-\bar{t})}\left(\frac{\partial}{\partial t}+\mathcal{L}^{\upsilon}\right)\varphi
\left(r,X_{\bar{t},\bar{x}}^{\bar u}(r),S_{\bar{t},\bar{s},\ell}^u(r)\right)+{\mathcal Q}\psi
\left(r,X_{\bar{t},\bar{x}}^{\bar u}(r),S_{\bar{t},\bar{s},\ell}^{\bar u}(r),\ell\right)\right\}dr\right],\label{E:Supersolution_8}
\end{align}
which implies, from the choice of $(\bar{t},\bar{x},\bar{s},\ell)$ and the definition of $\psi$, that
\begin{align}
&E\left[e^{-\beta(\tau-\bar{t})}\psi\left(\tau,X_{\bar{t},\bar{x}}^{\bar u}(\tau),S_{\bar{t},\bar{s},\ell}^{\bar u}(\tau)
,\alpha_{\bar{t},\ell}(\tau)\right)\right]\notag\\
\geq&
V\left(\bar{t},\bar{x},\bar{s},\ell\right)+
E\left[\int_{\bar{t}}^{\tau}\left\{(-\beta)e^{-\beta(r-\bar{t})}\varphi
\left(r,X_{\bar{t},\bar{x}}^{\bar u}(r),S_{\bar{t},\bar{s},\ell}^{\bar u}(r)\right)\right.\right.\notag\\
&\left.\left.+e^{-\beta(r-\bar{t})}\left(\frac{\partial}{\partial t}+\mathcal{L}^{\upsilon}\right)\varphi
\left(r,X_{\bar{t},\bar{x}}^{\bar u}(r),S_{\bar{t},\bar{s},\ell}^u(r)\right)+{\mathcal Q}V
\left(r,X_{\bar{t},\bar{x}}^{\bar u}(r),S_{\bar{t},\bar{s},\ell}^{\bar u}(r),\ell\right)\right\}dr\right].\label{E:Supersolution_9}
\end{align}
Substitute \eqref{E:Supersolution_9} into \eqref{E:Supersolution_2} and divide both sides by $-h$ we get
\begin{align}\label{E:Supersolution_10}
0\leq&E\left[\frac{1}{h}\int_{\bar{t}}^{\tau}\left\{e^{-\beta(r-\bar{t})}\left(\beta\varphi
\left(r,X_{\bar{t},\bar{x}}^{\bar u}(r),S_{\bar{t},\bar{s},\ell}^{\bar u}(r)\right)
-\left(\frac{\partial}{\partial t}+\mathcal{L}^{\upsilon}\right)\varphi
\left(r,X_{\bar{t},\bar{x}}^{\bar u}(r),S_{\bar{t},\bar{s},\ell}^{\bar u}(r)\right)\right.\right.\right.\notag\\
&\left.\left.-\phi\left({\bar u}(r)\right)S_{\bar{t},\bar{s},\ell}^{\bar u}(r)\bigg)-{\mathcal Q}V
\left(r,X_{\bar{t},\bar{x}}^{\bar u}(r),S_{\bar{t},\bar{s},\ell}^{\bar u}(r),\ell\right)\right\}dr\right]\notag\\
\leq&E\left[\frac{1}{h}\int_{\bar{t}}^{\bar{t}+h}\left\{e^{-\beta(r-\bar{t})}\left(\beta\varphi
\left(r,X_{\bar{t},\bar{x}}^{\bar u}(r),S_{\bar{t},\bar{s},\ell}^{\bar u}(r)\right)
-\left(\frac{\partial}{\partial t}+\mathcal{L}^{\upsilon}\right)\varphi
\left(r,X_{\bar{t},\bar{x}}^{\bar u}(r),S_{\bar{t},\bar{s},\ell}^{\bar u}(r)\right)\right.\right.\right.\notag\\
&\left.\left.-\phi\left({\bar u}(r)\right)S_{\bar{t},\bar{s},\ell}^{\bar u}(r)\bigg)-{\mathcal Q}V
\left(r,X_{\bar{t},\bar{x}}^{\bar u}(r),S_{\bar{t},\bar{s},\ell}^{\bar u}(r),\ell\right)\right\}dr\Big\vert\:
\hat{\tau}_1\wedge\hat{\tau}_2>\bar{t}+h\right]
P\left[\hat{\tau}_1\wedge\hat{\tau}_2>\bar{t}+h\right]\notag\\
&{} +K\frac{E\left[(\hat{\tau}_1\wedge\hat{\tau}_2-\bar{t})\vert\hat{\tau}_1\wedge\hat{\tau}_2\leq\bar{t}+h\right]}{h}
P\left[\hat{\tau}_1\wedge\hat{\tau}_2\leq\bar{t}+h\right]
\end{align}
for some constant $K$, due to continuity of the function on the left hand side of \eqref{E:Viscosity_1} and the boundedness
of state variable on the time interval $[0,\hat{\tau}_1\wedge\hat{\tau}_2]$.

By definition of $\hat{\tau}_1$, we have
$$
P\left[\hat{\tau}_1\leq\bar{t}+h\right]
=1-P\left[\alpha_{\bar{t},\ell}(r)=\ell,\;r\in(\bar{t},\bar{t}+h]\right]
=-q_{\ell\ell}h.
$$
So as $h\rightarrow 0$, $P\left[\hat{\tau}_1\leq\bar{t}+h\right]$ goes to zero.
By Chebyshev's inequality, we have

\begin{align}\label{E:Supersolution_12}
P\left[\hat{\tau}_2\leq\bar{t}+h\right]
=&P\left[\sup_{r\in[\bar{t},\bar{t}+h]}\left\{\left|X_{\bar{t},\bar{x}}^{\bar u}(r)-\bar{x}\right|^2
+\left|S_{\bar{t},\bar{s},\ell}^{\bar u}(r)-\bar{s}\right|^2\geq\eta^2\right\}\right]\notag\\
\leq&\frac{E\left[\sup_{r\in[\bar{t},\bar{t}+h]}\left|X_{\bar{t},\bar{x}}^{\bar u}(r)-\bar{x}\right|^2\right]
+E\left[\sup_{r\in[\bar{t},\bar{t}+h]}\left|S_{\bar{t},\bar{s},\ell}^{\bar u}(r)-\bar{s}\right|^2\right]}{\eta^2}.
\end{align}

Since each term on the numerator of \eqref{E:Supersolution_12} converges to zero as $h\rightarrow0$ and
$\lim_{h\rightarrow0}P\left[\hat{\tau}_2\leq\bar{t}+h\right]=0$, we have
\begin{align}\label{E:Supersolution_13}
\lim_{h\to0}P\left[\hat{\tau}_1\wedge\hat{\tau}_2\leq\bar{t}+h\right]
\leq\lim_{h\to0}\left(P\left[\hat{\tau}_1\leq\bar{t}+h\right]+P\left[\hat{\tau}_2\leq\bar{t}+h\right]\right)=0.
\end{align}
Let $h\to0$ in \eqref{E:Supersolution_10}. By the mean value theorem and the dominated convergence theorem, we have
$$
\beta\varphi\left(\bar{t},\bar{x},\bar{s}\right)-\left(\frac{\partial}{\partial t}+\mathcal{L}^\upsilon\right)\varphi\left(\bar{t},\bar{x},\bar{s}\right)
-\phi(\upsilon)\bar{s}-{\mathcal Q}V\left(\bar{t},\bar{x},\bar{s},\ell\right)\geq0.
$$

Since ${\bar u}(r)\equiv\upsilon\in U$ is chosen arbitrarily, we take the supremum over $U$ and get
$$
\beta\varphi\left(\bar{t},\bar{x},\bar{s}\right)-\frac{\partial}{\partial t}\varphi\left(\bar{t},\bar{x},\bar{s}\right)
-\sup_{\upsilon\in U}\left\{{\mathcal L}^\upsilon\varphi\left(\bar{t},\bar{x},\bar{s}\right)+\phi(\upsilon)\bar{s}\right\}-{\mathcal Q}
V\left(\bar{t},\bar{x},\bar{s},\ell\right)\geq0.
$$

Therefore, V is a viscosity supersolution of the HJB equation \eqref{E:HJB}.
\end{proof}

For $\ell\in{\mathbb M}$, define the Hamiltonian function ${\mathcal H}$ by
\begin{equation}\label{E:Hamiltonian_1}
{\mathcal H}(t,x,s,p,q,M,\ell):=\sup_{\upsilon\in U}\left\{-\upsilon p+\mu(t,s,\upsilon,\ell)q+\frac{1}{2}\sigma^2(t,s,\upsilon,\ell)M+\phi(\upsilon)s\right\}.
\end{equation}

\begin{lemma}\label{L:Hamiltonian}
For all $\ell\in{\mathbb M}$, the Hamiltonian ${\mathcal H}(t,x,s,p,q,M,\ell)$ is continuous in $(t,x,s,p,q,M)\in[0,T)\times(0,\infty)\times(0,\infty)\times
{\mathbb R}\times{\mathbb R}\times{\mathbb R}$.
\end{lemma}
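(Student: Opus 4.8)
The plan is to prove continuity of $\mathcal{H}(t,x,s,p,q,M,\ell)$ by exploiting the fact that the supremum is taken over the \emph{fixed compact} set $U$, so that the argument reduces to uniform continuity of the function inside the braces together with a standard "sup over a compact set preserves continuity" lemma. Write $F(\upsilon;t,x,s,p,q,M,\ell) := -\upsilon p+\mu(t,s,\upsilon,\ell)q+\tfrac12\sigma^2(t,s,\upsilon,\ell)M+\phi(\upsilon)s$, so that $\mathcal{H}=\sup_{\upsilon\in U}F(\upsilon;\cdot)$. First I would fix $\ell$ (there are only finitely many regimes, so continuity in the remaining variables for each fixed $\ell$ is all that is needed) and observe that $F$ is jointly continuous in $(\upsilon,t,x,s,p,q,M)$: the terms $-\upsilon p$ and $\phi(\upsilon)s$ are manifestly continuous (with $\phi$ continuous by Assumption~\ref{ass2}), while $\mu(t,s,\upsilon,\ell)$ and $\sigma(t,s,\upsilon,\ell)$ are continuous in $(t,s)$ by the Lipschitz bound in Assumption~\ref{ass1}; note that $\sigma^2$ is then also continuous, and all coefficients are locally bounded by $K(1+|s|)$, again by Assumption~\ref{ass1}.

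Next I would invoke the elementary fact that if $F$ is continuous on $K\times \Theta$ with $K$ compact, then $\theta\mapsto \sup_{\upsilon\in K}F(\upsilon,\theta)$ is continuous on $\Theta$. Concretely, fix a point $\theta_0=(t_0,x_0,s_0,p_0,q_0,M_0)$ in the open set $[0,T)\times(0,\infty)\times(0,\infty)\times\mathbb{R}^3$ and a bounded neighbourhood $N$ of $\theta_0$ whose closure stays in the domain; on the compact set $U\times\overline{N}$ the function $F$ is uniformly continuous, so for any $\varepsilon>0$ there is $\delta>0$ with $|F(\upsilon,\theta)-F(\upsilon,\theta')|<\varepsilon$ for all $\upsilon\in U$ whenever $|\theta-\theta'|<\delta$. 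Taking suprema over $\upsilon\in U$ on both sides and using $|\sup_\upsilon a(\upsilon)-\sup_\upsilon b(\upsilon)|\le \sup_\upsilon|a(\upsilon)-b(\upsilon)|$ (the same elementary inequality already used in the proof of Lemma~\ref{Lemma2a}) yields $|\mathcal{H}(\theta)-\mathcal{H}(\theta')|\le\varepsilon$ for $|\theta-\theta'|<\delta$, which is exactly continuity of $\mathcal{H}$ at $\theta_0$. Since $\theta_0$ was arbitrary, $\mathcal{H}(\cdot,\ell)$ is continuous on the whole domain.

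I do not expect a serious obstacle here; this is a routine verification. The only point that deserves a little care is to make sure the neighbourhood $N$ is chosen with $\overline{N}\subset[0,T)\times(0,\infty)\times(0,\infty)\times\mathbb{R}^3$ so that the coefficient functions $\mu,\sigma$ are bounded there (they are only known to satisfy linear growth, so one genuinely needs to localise to a bounded region before claiming uniform continuity); the compactness of $U$ then does the rest. Note also that $\mathcal{H}$ does not in fact depend on $x$, so continuity in $x$ is trivial, but stating it as above with $x$ present costs nothing.
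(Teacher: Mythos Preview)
Your argument is correct and is a cleaner packaging of what the paper does. The paper proves lower and upper semicontinuity of $\mathcal{H}$ separately by picking, for each $\delta>0$, a $\delta$-optimal control $\bar\upsilon$ at the base point (respectively at the nearby point) and comparing the two values of the integrand at the same $\bar\upsilon$; the resulting one-sided estimates are then combined. You instead invoke the general principle that a supremum over a fixed compact parameter set of an equicontinuous family is continuous, via the inequality $|\sup_\upsilon a(\upsilon)-\sup_\upsilon b(\upsilon)|\le\sup_\upsilon|a(\upsilon)-b(\upsilon)|$. Conceptually the two routes are the same---both rest on compactness of $U$ and the uniform (in $\upsilon$) continuity of the data---but yours avoids the asymmetric $\liminf$/$\limsup$ bookkeeping.

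One small point of care: you justify the key equicontinuity estimate by asserting that $F$ is \emph{jointly} continuous on $U\times\overline N$ and hence uniformly continuous there. Assumption~\ref{ass1} as stated gives Lipschitz continuity of $\mu,\sigma$ in $(t,s)$ and linear growth, both \emph{uniformly in $\upsilon$}, but says nothing about continuity in $\upsilon$ itself. Fortunately the only conclusion you actually use is $|F(\upsilon,\theta)-F(\upsilon,\theta')|<\varepsilon$ for all $\upsilon\in U$ when $|\theta-\theta'|<\delta$, and this follows directly from those uniform Lipschitz and growth bounds together with boundedness of $U$ (so that $|\upsilon|$, $|\phi(\upsilon)|$, $|\mu|$, $|\sigma|$ are bounded on $U\times\overline N$), without ever varying $\upsilon$. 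So the argument stands once you phrase the intermediate step as ``$\{F(\upsilon,\cdot):\upsilon\in U\}$ is equicontinuous on $\overline N$'' rather than ``$F$ is uniformly continuous on $U\times\overline N$''. The paper's proof glosses over exactly the same point when it appeals to ``the uniform continuity of $\mu(\cdot,\cdot,\cdot,\ell)$ and $\sigma(\cdot,\cdot,\cdot,\ell)$''.
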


\begin{proof}

Let the point $(\bar{t},\bar{x},\bar{s},\bar{p},\bar{q},\bar{M})\in[0,T)\times(0,\infty)\times(0,\infty)\times{\mathbb R}^3$ and $B_\eta(\bar{t},\bar{x},\bar{s},\bar{p},\bar{q},\bar{M})$ the ball with the center $(\bar{t},\bar{x},\bar{s},\bar{p},\bar{q},\bar{M})$ and the radius $\eta$, a small constant.
By the definition of the Hamiltonian function, for an arbitrary given $\delta>0$, there exists a  $\bar{\upsilon}\in U$ such that
\begin{align}\label{E:Hamiltonian_3}
{\mathcal H}\left(\bar{t},\bar{x},\bar{s},\bar{p},\bar{q},\bar{M},\ell\right)-\delta\leq-\bar{\upsilon}\bar{p}+\mu\left(\bar{t},\bar{s},\bar{\upsilon},\ell\right)\bar{q}
+\frac{1}{2}\sigma^2\left(\bar{t},\bar{s},\bar{\upsilon},\ell\right)\bar{M}+\phi\left(\bar{\upsilon}\right)\bar{s}.
\end{align}
For any point $(t',x',s',p',q',M')\in B_\eta(\bar{t},\bar{x},\bar{s},\bar{p},\bar{q},\bar{M})$ we also have
\begin{align}
{\mathcal H}\left(t',x',s',p',q',M',\ell\right)&\geq-\bar{\upsilon}p'+\mu\left(t',s',\bar{\upsilon},\ell\right)q'
+\frac{1}{2}\sigma^2\left(t',s',\bar{\upsilon},\ell\right)M'+\phi\left(\bar{\upsilon}\right)s'.\label{E:Hamiltonian_6}
\end{align}
Subtracting \eqref{E:Hamiltonian_6} from \eqref{E:Hamiltonian_3}, we have
\begin{align}\label{E:Hamiltonian_7}
&{\mathcal H}(\bar{t},\bar{x},\bar{s},\bar{p},\bar{q},\bar{M},\ell)-{\mathcal H}(t',x',s',p',q',M',\ell)-\delta\notag\\
&+\frac{1}{2}\sigma^2(\bar{t},\bar{s},\bar{\upsilon},\ell)\left|\bar{M}-M'\right|+\frac{1}{2}\left|\bar{M}+\eta\right|\left|
\sigma^2(\bar{t},\bar{s},\bar{\upsilon},\ell)-\sigma^2(t',s',\bar{\upsilon},\ell)\right|+\left|\phi(\bar{\upsilon})\right|\left|\bar{s}-s'\right|.
\end{align}
Taking the limit inferior and then letting $\delta$ tend to zero in \eqref{E:Hamiltonian_7} we get
\begin{align}\label{E:Hamiltonian_8}
{\mathcal H}(\bar{t},\bar{x},\bar{s},\bar{p},\bar{q},\bar{M},\ell)\leq
\liminf_{\substack{(t',x',s',p',q',M')\\\to(\bar{t},\bar{x},\bar{s},\bar{p},\bar{q},\bar{M})}}
{\mathcal H}(t',x',s',p',q',M',\ell).
\end{align}

Similarly, we can show, using the uniform continuity  of
$\mu(\cdot,\cdot,\cdot,\ell)$ and $\sigma(\cdot,\cdot,\cdot,\ell)$ and the boundedness of the control set $U$, that
\begin{align}\label{E:Hamiltonian_10}
\limsup_{\substack{(t',x',s',p',q',M')\\\to(\bar{t},\bar{x},\bar{s},\bar{p},\bar{q},\bar{M})}}
{\mathcal H}(t',x',s',p',q',M',\ell)\leq
{\mathcal H}(\bar{t},\bar{x},\bar{s},\bar{p},\bar{q},\bar{M},\ell).
\end{align}
(\ref{E:Hamiltonian_8}) and (\ref{E:Hamiltonian_10}) imply that
the Hamiltonian ${\mathcal H}(t,x,s,p,q,M,\ell)$ is continuous in $(t,x,s,p,q,M)$.

\end{proof}

For $\varphi\in C^{1,1,2}$ Theorem~\ref{T:Continuity} and  Lemma~\ref{L:Hamiltonian} imply that
the mapping
\begin{align}\label{E:Hamiltonian_13}
(t,x,s)\mapsto\beta\varphi(t,x,s)-\frac{\partial\varphi}{\partial t}(t,x,s)-\sup_{\upsilon\in U}\left\{\mathcal{L}^\upsilon
\varphi(t,x,s)+\phi(\upsilon)s\right\}-{\mathcal Q}V(t,x,s,\ell)
\end{align}
is continuous.

\begin{theorem} \label{T:Subsolution}
For each $\ell\in{\mathbb M}$, the value function $V=\{V(t,x,s,\ell)\}_{\ell\in{\mathbb M}}$ is a viscosity subsolution of the HJB equation \eqref{E:HJB}.
\end{theorem}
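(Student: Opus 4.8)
The plan is to argue by contradiction, mirroring the structure of the supersolution proof (Theorem~\ref{T:supersolution}) but using a near-optimal control instead of a constant one. Fix $\ell\in{\mathbb M}$, a point $(\bar t,\bar x,\bar s)\in[0,T)\times(0,\infty)\times(0,\infty)$, and a test function $\varphi\in C^{1,1,2}$ such that $V(\cdot,\cdot,\cdot,\ell)-\varphi$ attains a \emph{maximum} at $(\bar t,\bar x,\bar s)$, normalized so that $V(\bar t,\bar x,\bar s,\ell)=\varphi(\bar t,\bar x,\bar s)$. Suppose, for contradiction, that the subsolution inequality fails, i.e.\ there is $\theta>0$ with
$$
\beta\varphi(\bar t,\bar x,\bar s)-\frac{\partial\varphi}{\partial t}(\bar t,\bar x,\bar s)-\sup_{\upsilon\in U}\left\{{\mathcal L}^\upsilon\varphi(\bar t,\bar x,\bar s)+\phi(\upsilon)\bar s\right\}-{\mathcal Q}V(\bar t,\bar x,\bar s,\ell)\geq 2\theta.
$$
By the continuity of the map \eqref{E:Hamiltonian_13} (which rests on Theorem~\ref{T:Continuity} and Lemma~\ref{L:Hamiltonian}), there is a radius $\eta>0$, with $B_\eta(\bar x,\bar s)\subset(0,\infty)\times(0,\infty)$, such that the same expression with $(\bar t,\bar x,\bar s)$ replaced by any $(r,x,s)$ in a space-time neighbourhood of $(\bar t,\bar x,\bar s)$ is still $\geq\theta$; equivalently, for every $\upsilon\in U$ and every such $(r,x,s)$,
$$
\beta\varphi(r,x,s)-\frac{\partial\varphi}{\partial t}(r,x,s)-{\mathcal L}^\upsilon\varphi(r,x,s)-\phi(\upsilon)s-{\mathcal Q}V(r,x,s,\ell)\geq\theta.
$$

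Next I would introduce the stopping times exactly as in the supersolution proof: $\hat\tau_1$ the first jump time of $\alpha_{\bar t,\ell}(\cdot)$, and $\hat\tau_2$ the exit time of $(X^u_{\bar t,\bar x}(r),S^u_{\bar t,\bar s,\ell}(r))$ from $B_\eta(\bar x,\bar s)$, and set $\tau:=(\bar t+h)\wedge\hat\tau_1\wedge\hat\tau_2$ for small $h<T-\bar t$; note $\tau<\tau_0$. For an arbitrary \emph{admissible} control $u\in{\mathcal U}$ the dynamic programming principle gives
$$
V(\bar t,\bar x,\bar s,\ell)\leq E\!\left[\int_{\bar t}^{\tau}e^{-\beta(r-\bar t)}\phi(u(r))S^u_{\bar t,\bar s,\ell}(r)\,dr+e^{-\beta(\tau-\bar t)}V\!\left(\tau,X^u_{\bar t,\bar x}(\tau),S^u_{\bar t,\bar s,\ell}(\tau),\alpha_{\bar t,\ell}(\tau)\right)\right].
$$
Now apply Dynkin's formula to $e^{-\beta(r-\bar t)}\psi$ with $\psi$ defined as in \eqref{E:Supersolution_3} ($\psi=\varphi$ on regime $\ell$, $\psi=V$ off $\ell$); since $V\leq\varphi$ everywhere on regime $\ell$ and $\psi(\tau,\cdot)$ dominates the $V$-term appearing in the DPP inequality (using $V\leq\varphi$), I can replace $V(\tau,\dots)$ by $\psi(\tau,\dots)$ and expand. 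On $\{\hat\tau_1\wedge\hat\tau_2>\bar t+h\}$ the integrand stays in the neighbourhood where the displayed $\geq\theta$ bound holds, so the expansion yields $V(\bar t,\bar x,\bar s,\ell)\leq\varphi(\bar t,\bar x,\bar s)-\theta\,E[\tau-\bar t]+(\text{error from }\{\hat\tau_1\wedge\hat\tau_2\leq\bar t+h\})$. Dividing by $h$ and letting $h\to0$, the error term vanishes by the same Chebyshev/moment estimates \eqref{E:Supersolution_12}–\eqref{E:Supersolution_13} used before and $P[\hat\tau_1\le\bar t+h]=-q_{\ell\ell}h\to0$, while $E[\tau-\bar t]/h\to1$; hence $0\leq-\theta$, a contradiction. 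Therefore the subsolution inequality holds, and combined with Theorem~\ref{T:supersolution} this proves $V$ is a viscosity solution, completing the proof of Theorem~\ref{T:solution}.

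The one genuinely delicate point — and the main obstacle — is the handling of the regime-switching term when passing from $V$ to the test function. Unlike in the diffusion-only setting, the operator ${\mathcal Q}$ couples regime $\ell$ to the other regimes through the \emph{actual} value function $V$, which is only known to be continuous, not $C^{1,1,2}$; this is why one must use the hybrid function $\psi$ rather than a single smooth test function, and why the inequality $V\leq\varphi$ (valid only on regime $\ell$, at the max point and nearby) has to be invoked precisely twice: once to upgrade $V(\tau,\dots)$ to $\psi(\tau,\dots)$ in the DPP bound, and once implicitly so that the ${\mathcal Q}\psi$ term produced by Dynkin's formula can be compared with the ${\mathcal Q}V$ term appearing in the assumed strict inequality. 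Keeping track of the directions of these inequalities, and verifying that the first jump of $\alpha$ before time $\bar t+h$ contributes only an $o(h)$ error (so that the ${\mathcal Q}V$ term survives the limit with the right sign), is where the argument needs care; everything else is a routine adaptation of the supersolution proof.
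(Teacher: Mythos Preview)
Your approach is essentially the paper's, and your handling of the hybrid function $\psi$ and the ${\mathcal Q}$-term (the two uses of $V\leq\varphi$ on regime $\ell$, yielding $\psi\geq V$ at $\tau$ and ${\mathcal Q}\psi\leq{\mathcal Q}V$) is exactly right.

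There is, however, one genuine slip. You write that for an \emph{arbitrary} admissible control $u\in{\mathcal U}$ the dynamic programming principle gives $V(\bar t,\bar x,\bar s,\ell)\leq E[\cdots]$. This is false: for an arbitrary control the DPP gives the opposite inequality $V\geq E[\cdots]$, since $V$ is the supremum over controls. What you need---and what you correctly announced in your opening sentence---is a \emph{near-optimal} control $\bar u=\bar u_h$, depending on $h$, such that
\[
V(\bar t,\bar x,\bar s,\ell)-\tfrac{\theta}{2}\,h\;\leq\;E\!\left[\int_{\bar t}^{\tau}e^{-\beta(r-\bar t)}\phi(\bar u(r))S^{\bar u}_{\bar t,\bar s,\ell}(r)\,dr+e^{-\beta(\tau-\bar t)}V\bigl(\tau,X^{\bar u}_{\bar t,\bar x}(\tau),S^{\bar u}_{\bar t,\bar s,\ell}(\tau),\alpha_{\bar t,\ell}(\tau)\bigr)\right],
\]
which is precisely \eqref{E:Subsolution_3} in the paper. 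After dividing by $h$ the extra $\tfrac{\theta}{2}h$ contributes $-\tfrac{\theta}{2}$, which combined with the $\theta\cdot E[\tau-\bar t]/h\to\theta$ coming from the integrand still gives the contradiction $0\geq\tfrac{\theta}{2}$. Note also that $\bar u_h$ changes with $h$, so the stopping time $\hat\tau_2$ and the moment estimates must be checked uniformly in $u$; this holds because the bounds \eqref{E:Supersolution_12}--\eqref{E:Supersolution_13} depend only on $\eta$ and the linear-growth constants, not on the particular control. With this correction your argument coincides with the paper's.
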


\begin{proof}
Assume, for contradiction, that $V$ is not a viscosity subsolution. Then there exists $\ell\in{\mathbb M}$, $(\bar{t},\bar{x},\bar{s})\in[0,T)
\times(0,\infty)\times(0,\infty)$ and a test function $\varphi(t,x,s)\in C^{1,1,2}([0,T)\times(0,\infty)\times(0,\infty))$ such that
\begin{equation}\label{E:Subsolution_1}
\beta\varphi({\bar t},{\bar x},{\bar s})-\frac{\partial\varphi}{\partial t}({\bar t},{\bar x},{\bar s})-\sup_{\upsilon\in U}
\left\{{\mathcal L}^\upsilon\varphi({\bar t},{\bar x},{\bar s})+\phi(\upsilon){\bar s}\right\}-{\mathcal Q}V({\bar t},{\bar x},{\bar s},\ell)>0,
\end{equation}
where $V(t,x,s,\ell)-\varphi(t,x,s)$ attains its maximum at $(\bar{t},\bar{x},\bar{s})$. Without loss of generality, assume that
$V(\bar{t},\bar{x},\bar{s},\ell)-\varphi(\bar{t},\bar{x},\bar{s})=0$.

By the continuity of the mapping in \eqref{E:Hamiltonian_13}, for $\delta>0$, there exists $\eta>0$ such that
\begin{equation}\label{E:Subsolution_2}
\beta\varphi\left(t,x,s\right)-\frac{\partial\varphi}{\partial t}\left(t,x,s\right)-\sup_{\upsilon\in U}
\left\{\mathcal{L}^\upsilon\varphi\left(t,x,s\right)+\phi(\upsilon)s\right\}-{\mathcal Q}V(t,x,s,\ell)\geq\delta
\end{equation}
for all $(t,x,s)\in B_\eta(\bar{t},\bar{x},\bar{s})$. Let $\eta$ be small enough such that $B_\eta(\bar{t},\bar{x},\bar{s})\subset
[0,T)\times(0,\infty)\times(0,\infty)$.

Let $h>0$ be small enough such that $({\bar t},{\bar t}+h)\subset[0,T)$.
By dynamic programming principle, there exists a control process ${\bar u}\in\mathcal U$ such that
\begin{equation}\label{E:Subsolution_3}
V\left(\bar{t},\bar{x},\bar{s},\ell\right)-\frac{\delta}{2}h\leq E\left[\int_{\bar{t}}^{\tau}e^{-\beta(r-\bar{t})}\phi\left({\bar u}(r)\right)
S_{\bar{t},\bar{s},\ell}^{\bar u}(r)dr+e^{-\beta(\tau-\bar{t})}V\left(\tau,X_{\bar{t},\bar{x}}^{\bar u}(\tau),S_{\bar{t},\bar{s},\ell}^{\bar u}(\tau),
\alpha_{\bar{t},\ell}(\tau)\right)\right],
\end{equation}
where $\tau\geq t$ is any stopping time.
Let $\hat{\tau}_1$ be the first jump time of $\alpha_{\bar{t},\ell}(\cdot)$ and define the exit time
$$
\hat{\tau}_3:=\inf\left\{r\geq\bar{t}:\left(r,X_{\bar{t},\bar{x}}^{\bar u}(r),S_{\bar{t},\bar{s},\ell}^{\bar u}(r)\right)\not\in
B_\eta(\bar{t},\bar{x},\bar{s})\right\}.
$$
Let $\tau:=(\bar{t}+h)\wedge\hat{\tau}_1\wedge\hat{\tau}_3$ and
define a function $\psi(t,x,s,i)$
as in \eqref{E:Supersolution_3}. We have
$$
\psi\left(\tau,X_{\bar{t},\bar{x}}^{\bar u}(\tau),S_{\bar{t},\bar{s},\ell}^{\bar u}(\tau),\alpha_{\bar{t},\ell}(\tau)\right)
\geq V\left(\tau,X_{\bar{t},\bar{x}}^{\bar u}(\tau),S_{\bar{t},\bar{s},\ell}^{\bar u}(\tau),\alpha_{\bar{t},\ell}(\tau)\right)
$$
and
\begin{equation}\label{E:Subsolution_6}
{\mathcal Q}\psi\left(r,X_{\bar{t},\bar{x}}^{\bar u}(r),S_{\bar{t},\bar{s},\ell}^{\bar u}(r),\ell\right)
\leq{\mathcal Q}V\left(r,X_{\bar{t},\bar{x}}^{\bar u}(r),S_{\bar{t},\bar{s},\ell}^{\bar u}(r),\ell\right).
\end{equation}

So equation \eqref{E:Subsolution_3} turns into
\begin{equation}\label{E:Subsolution_7}
\varphi\left(\bar{t},\bar{x},\bar{s}\right)-\frac{\delta}{2}h\leq E\left[\int_{\bar{t}}^{\tau}e^{-\beta(r-\bar{t})}
\phi\left({\bar u}(r)\right)S_{\bar{t},\bar{s},\ell}^{\bar u}(r)dr+e^{-\beta(\tau-\bar{t})}\psi
\left(\tau,X_{\bar{t},\bar{x}}^{\bar u}(\tau),S_{\bar{t},\bar{s},\ell}^{\bar u}(\tau),\alpha_{\bar{t},\ell}(\tau)\right)\right].
\end{equation}

Combining \eqref{E:Supersolution_8}, \eqref{E:Subsolution_6} and \eqref{E:Subsolution_7}, we divide both sides of the equation
by $h$,
\begin{align}\label{E:Subsolution_8}
0\geq-\frac{\delta}{2}&+E\left[\frac{1}{h}\int_{\bar{t}}^{\tau}\left\{e^{-\beta(r-\bar{t})}\left[\beta\varphi
\left(r,X_{\bar{t},\bar{x}}^{\bar u}(r),S_{\bar{t},\bar{s},\ell}^{\bar u}(r)\right)
-\left(\frac{\partial}{\partial t}+\mathcal{L}^{{\bar u}(r)}\right)\right.\right.\right.\notag\\
&\left.\left.\varphi\left(r,X_{\bar{t},\bar{x}}^{\bar u}(r),S_{\bar{t},\bar{s},\ell}^{\bar u}(r)\right)
-\phi\left({\bar u}(r)\right)S_{\bar{t},\bar{s},\ell}^{\bar u}(r)\bigg]-{\mathcal Q}V
\left(r,X_{\bar{t},\bar{x}}^{\bar u}(r),S_{\bar{t},\bar{s},\ell}^{\bar u}(r),\ell\right)\right\}dr\right].
\end{align}

Substituting \eqref{E:Subsolution_2} into \eqref{E:Subsolution_8}, we have
\begin{equation}\label{E:Subsolution_9}
0\geq-\frac{\delta}{2}+\frac{\delta}{h}E[\tau-\bar{t}].
\end{equation}
By \eqref{E:Supersolution_13}, we have
\begin{align}
1\geq&\frac{1}{h}E\left[\tau-\bar{t}\right]
\geq \frac{1}{h} E\left[h\;\vert\hat{\tau}_1\wedge\hat{\tau}_3>\bar{t}
+h\right]P\left[\hat{\tau}_1\wedge\hat{\tau}_2>\bar{t}+h\right]
=P\left[\hat{\tau}_1\wedge\hat{\tau}_3>\bar{t}+h\right]\to1\notag
\end{align}
as $h\to0$, which implies that
$$
\lim_{h\to0}\frac{1}{h}E\left[\tau-\bar{t}\right]=1.
$$

Letting $h\to0$ in \eqref{E:Subsolution_9}, we get $\delta/2\leq0$,  a contradiction. The inequality in \eqref{E:Subsolution_1} therefore holds, which completes the proof.

\end{proof}

Since the value function $V$ is both a viscosity subsolution and a viscosity supersolution,
we conclude that it is a viscosity solution of the HJB equation \eqref{E:HJB}. We have proved Theorem \ref{T:solution}.

\section{Proof of Theorem~\ref{T:Comparison}} \label{S:Uniqueness}
In this section vectors $(t,x,s)$ and $(r,y,v)$ and their specific values such as $(\bar t,\bar x,\bar s)$ appear many times. To simplify the expressions we denote by $\mathbf x=(t,x,s)$ and $\mathbf y=(r,y,v)$. Their specific values are defined similarly, for example,
$\bar{\mathbf x}=(\bar t,\bar x,\bar s)$.

To prove the uniqueness, we need an alternative definition of viscosity solution in terms of superjets and subjets. The second-order superjet of an upper-semicontinuous  function $U$ at a point $\bar{\mathbf x} \in \Sigma:=[0,T)\times(0,\infty)\times(0,\infty)$, denoted by  ${\mathcal P}^{2,+}U(\bar{\mathbf x})$, is defined as a set of elements $(\bar{b},\bar{p},\bar{q},\bar{M})\in{\mathbb R}\times{\mathbb R}\times{\mathbb R}\times{\mathbb R}$ such that
\begin{equation} \label{superjet}
U(\mathbf x)\leq U(\bar{\mathbf x})+(\bar{b}, \bar p, \bar{q}) \cdot (\mathbf x-\bar{\mathbf x})+\frac{1}{2}\bar{M}(s-\bar{s})^2+e(\mathbf x-\bar{\mathbf x}),
\end{equation}
where $e(\mathbf x-\bar{\mathbf x})=o(\left|t-\bar{t}\right|+\left|x-\bar{x}\right|+\left|s-\bar{s}\right|^2)$ is a higher order error term.
The limiting superjet $\overline{\mathcal P}^{2,+}U(\mathbf x)$ is the set of elements $(b,p,q,M)\in{\mathbb R}^4$ for which there exists
a sequence $({\mathbf x}_\epsilon)$ in $\Sigma$ and
$(b_\epsilon,p_\epsilon,q_\epsilon,M_\epsilon)\in{\mathcal P}^{2,+}U({\mathbf x}_\epsilon)$ such that
$({\mathbf x}_\epsilon,U({\mathbf x}_\epsilon),b_\epsilon,p_\epsilon,q_\epsilon,M_\epsilon)\to
(\mathbf x,U(\mathbf x),b,p,q,M)$.

The second-order subjet of a lower-semicontinuous function $V$ at a point
$\bar{\mathbf x} \in \Sigma$, denoted by  ${\mathcal P}^{2,-}V(\bar{\mathbf x} )$, is defined as in (\ref{superjet}) with a greater than or equal ($\geq$) inequality. The set $\overline{\mathcal P}^{2,-}V(\mathbf x)$ is defined similarly.

Note that since $x$ is a state variable superjets and subjects should normally also have  second order terms with respect to $x$. However, since the HJB equation \eqref{E:HJB} only involves the first order derivative  of the value function with respect to $x$, the second order expansion  in $x$ is not needed.

Assume that $U$ is upper-semicontinuous and $\varphi\in C^{1,1,2}(\Sigma)$. Then $\bar{\mathbf x}\in \Sigma$ is a maximum point of $U-\varphi$ if and only if $(D_{{\mathbf x}}\varphi(\bar{\mathbf x}),D^2_s\varphi
(\bar{\mathbf x}))\in{\mathcal P}^{2,+}U(\bar{\mathbf x})$, where
$D_{{\mathbf x}}\varphi(\bar{\mathbf x}) =(D_t\varphi(\bar{\mathbf x}),D_x\varphi(\bar{\mathbf x}),D_s\varphi(\bar{\mathbf x}))$. Similar conclusion holds  for the minimum point and the subjet.

\begin{lemma}\label{L:Viscosity_Alt} (\cite[Theorem 8.3]{CIL92})
An $m$-tuple $V=\{V(\cdot,\cdot,\cdot,\ell)\}_{\ell\in{\mathbb M}}$ of continuous functions on $\Sigma$ is a viscosity
subsolution (resp. supersolution) of the HJB equation \eqref{E:HJB} if and only if for
${\mathbf x}\in\Sigma$ such that $(b,p,q,M)\in\overline{\mathcal P}^{2,+}V({\mathbf x},\ell)$
(resp. $\in\overline{\mathcal P}^{2,-}V({\mathbf x},\ell)$) for any fixed $\ell\in{\mathbb M}$, we have
\begin{align*}
\beta V({\mathbf x},\ell)-b-&{\mathcal H}({\mathbf x},p,q,M,\ell)-
{\mathcal Q}V({\mathbf x},\ell)\leq0\quad
(\mbox{resp.~} \geq 0),
\end{align*}
where ${\mathcal H}({\mathbf x},p,q,M,\ell)$ is the Hamiltonian define in (\ref{E:Hamiltonian_1}). The $m$-tuple $V$ is a viscosity solution if it is both a viscosity subsolution and a viscosity supersolution.
\end{lemma}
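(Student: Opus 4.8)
The plan is to deduce the statement from the classical correspondence between $C^{1,1,2}$ test functions and second-order semijets, the single point needing attention being that the coupling term $\mathcal{Q}V$ is nonlocal in the regime variable but contains no derivatives, so it passes to the limit in the limiting-jet construction by mere continuity of $V$.

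First I would recall the ``magic'' description of semijets (see \cite[Section~8]{CIL92} and the remark preceding this lemma): for a fixed $\ell$ and $\bar{\mathbf x}\in\Sigma$, one has $(b,p,q,M)\in\mathcal{P}^{2,+}V(\bar{\mathbf x},\ell)$ if and only if there exists $\varphi\in C^{1,1,2}(\Sigma)$ such that $V(\cdot,\ell)-\varphi$ has a local maximum at $\bar{\mathbf x}$ with $(D_t\varphi,D_x\varphi,D_s\varphi,D^2_s\varphi)(\bar{\mathbf x})=(b,p,q,M)$; only a second-order expansion in $s$ is involved, which is exactly what the definition of the jets in \eqref{superjet} encodes and all that the HJB equation \eqref{E:HJB} needs. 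Replacing $\varphi$ by $\varphi+(V(\bar{\mathbf x},\ell)-\varphi(\bar{\mathbf x}))$ we may assume $\varphi(\bar{\mathbf x})=V(\bar{\mathbf x},\ell)$, and then, by the definition of the Hamiltonian in \eqref{E:Hamiltonian_1}, $\sup_{\upsilon\in U}\{\mathcal{L}^\upsilon\varphi(\bar{\mathbf x})+\phi(\upsilon)\bar s\}=\mathcal{H}(\bar{\mathbf x},p,q,M,\ell)$. Hence the test-function inequality \eqref{E:Viscosity_1} is, term by term, the inequality $\beta V(\bar{\mathbf x},\ell)-b-\mathcal{H}(\bar{\mathbf x},p,q,M,\ell)-\mathcal{Q}V(\bar{\mathbf x},\ell)\le 0$. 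This proves the equivalence with $\mathcal{P}^{2,+}$ in place of $\overline{\mathcal P}^{2,+}$, as well as the ``if'' direction of the lemma, since every maximum point of $V(\cdot,\ell)-\varphi$ yields an element of $\mathcal{P}^{2,+}V(\bar{\mathbf x},\ell)\subset\overline{\mathcal P}^{2,+}V(\bar{\mathbf x},\ell)$.

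It then remains to pass from $\mathcal{P}^{2,+}$ to $\overline{\mathcal P}^{2,+}$. Given $(b,p,q,M)\in\overline{\mathcal P}^{2,+}V(\mathbf x,\ell)$, take $\mathbf x_\epsilon\to\mathbf x$ and $(b_\epsilon,p_\epsilon,q_\epsilon,M_\epsilon)\in\mathcal{P}^{2,+}V(\mathbf x_\epsilon,\ell)$ with $(\mathbf x_\epsilon,V(\mathbf x_\epsilon,\ell),b_\epsilon,p_\epsilon,q_\epsilon,M_\epsilon)\to(\mathbf x,V(\mathbf x,\ell),b,p,q,M)$, apply the $\mathcal{P}^{2,+}$ form of the subsolution inequality at each $\mathbf x_\epsilon$, and let $\epsilon\to0$. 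In the limit, $\beta V(\mathbf x_\epsilon,\ell)\to\beta V(\mathbf x,\ell)$ and $\mathcal{Q}V(\mathbf x_\epsilon,\ell)=\sum_{j\neq\ell}q_{\ell j}(V(\mathbf x_\epsilon,j)-V(\mathbf x_\epsilon,\ell))\to\mathcal{Q}V(\mathbf x,\ell)$ by continuity of $V$, $b_\epsilon\to b$, and $\mathcal{H}(\mathbf x_\epsilon,p_\epsilon,q_\epsilon,M_\epsilon,\ell)\to\mathcal{H}(\mathbf x,p,q,M,\ell)$ by Lemma~\ref{L:Hamiltonian}; this gives the required inequality at $\mathbf x$. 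The supersolution statement is obtained by the same argument with $\mathcal{P}^{2,+}$, $\overline{\mathcal P}^{2,+}$ and ``$\le0$'' replaced by $\mathcal{P}^{2,-}$, $\overline{\mathcal P}^{2,-}$ and ``$\ge0$''.

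I do not anticipate a real obstacle: the lemma is the present-model version of \cite[Theorem~8.3]{CIL92}, and the only things worth a careful word are (i) that the stability of $\mathcal{Q}V$ under the limiting-jet procedure rests purely on continuity of $V$ in $(t,x,s)$ rather than on any control of derivatives, and (ii) the deliberate asymmetry of the jets (second order in $s$, first order in $x$), which is admissible precisely because \eqref{E:HJB} is second order in $s$ and first order in $x$.
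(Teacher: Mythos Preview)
The paper does not supply a proof of this lemma at all: it is stated with a direct citation to \cite[Theorem~8.3]{CIL92} and then used. Your sketch is correct and follows the standard route---the test-function/jet correspondence, the observation that the zeroth-order coupling term $\mathcal{Q}V$ is unaffected by the choice of $\varphi$ and passes to limits by continuity of $V$, and the closure step via Lemma~\ref{L:Hamiltonian}---so you have in fact provided more detail than the paper itself.
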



The uniform polynomial growth condition  for $U$ and $V$ implies that there exists a constant $p>1$ such that, for each $\ell\in{\mathbb M}$
$$
\sup_{[0,T]\times(0,\infty)\times(0,\infty)}\frac{\left|U(\mathbf x,\ell)\right|+\left|V(\mathbf x,\ell)\right|}{1+\left|x\right|^p+\left|s\right|^p}<\infty.
$$

Define functions $\theta(x,s):=(1+\left|x\right|^{2p}+\left|s\right|^{2p})$ and $\kappa(t,x,s):=e^{-\gamma t}\theta(x,s)$ for $\gamma>0$.
Due to the linear growth condition \eqref{E:ModelSetup5} and the boundedness of  set $U$, there exists a
positive constant $c$ such that, for all $\ell\in{\mathbb M}$,
\begin{align}
&\beta\kappa-\frac{\partial\kappa}{\partial t}-\sup_{\upsilon\in U}\left\{{\mathcal L}^\upsilon\kappa\right\}\notag\\
=&\beta\kappa-\frac{\partial\kappa}{\partial t}-\sup_{\upsilon\in U}\left\{-\upsilon D_x\kappa+\mu(t,s,\upsilon,\ell)D_s\kappa
+\frac{1}{2}\sigma^2(t,s,\upsilon,\ell)D_s^2\kappa+{\mathcal Q}\kappa\right\}\notag\\
=&e^{-\gamma t}\left[(\beta+\gamma)\theta-\sup_{\upsilon\in U}\left\{-\upsilon D_x\theta+\mu(t,s,\upsilon,\ell)D_s\theta+\frac{1}{2}\sigma^2(t,s,\upsilon,\ell)D_s^2\theta
+{\mathcal Q}\theta\right\}\right]\notag\\
\geq&e^{-\gamma t}(\beta+\gamma-c)\theta,\notag
\end{align}
which is nonnegative as long as we choose the constant $\gamma$ large enough such that $(\beta+\gamma-c)>0$. Therefore, for any $\epsilon>0$, $\widetilde{V}^\epsilon(\mathbf x,\ell)
:=V(\mathbf x,\ell)+\epsilon\kappa(\mathbf x)$ is a supersolution to the HJB equation \eqref{E:HJB}.
To check this, let $\varphi(\mathbf x,\ell)$ be the test function for $\widetilde{V}^\epsilon(\mathbf x,\ell)$. So
$\varphi(\mathbf x,\ell)-\epsilon\kappa(\mathbf x)$ is the test function for the supersolution $V(\mathbf x,\ell)$.
We have
\begin{align}
&\beta\varphi-\frac{\partial\varphi}{\partial t}-\sup_{\upsilon\in U}\left\{{\mathcal L}^\upsilon\varphi+\phi(\upsilon)s\right\}\notag\\
\geq&\beta(\varphi-\epsilon\kappa)-\frac{\partial}{\partial t}(\varphi-\epsilon\kappa)-\sup_{\upsilon\in U}
\left\{{\mathcal L}^\upsilon(\varphi-\epsilon\kappa)+\phi(\upsilon)s\right\}+\epsilon\left(\beta\kappa-\frac{\partial\kappa}{\partial t}-\sup_{\upsilon\in U}\left\{{\mathcal L}^\upsilon\kappa\right\}\right)\notag\\
\geq&0.\notag
\end{align}
By the polynomial growth condition of $U$, $V$ and the definition of $\kappa$, we have
$$
\lim_{x,s\to\infty}\sup_{t\in[0,T]}(U-\widetilde{V}^\epsilon)(\mathbf x,\ell)=-\infty
$$
for all $\epsilon>0$.
We can assume that the maximum of $(U-V)(\mathbf x,\ell)$ over $\ell\in{\mathbb M}$ and ${\mathbf x}\in[0,T]\times(0,\infty)\times(0,\infty)$ is
attained (up to a penalization) at $\ell\in{\mathbb M}$ and ${\mathbf x}\in\Sigma_1:=[0,T]\times O_1\times O_2$ for some compact set $O_1\subset(0,\infty)$
and $O_2\subset(0,\infty)$.  Let ${\mathcal M}$ denote this maximum.

Suppose, for contradiction, that there exists $\ell\in{\mathbb M}$
and ${\mathbf x}\in\Sigma$ such that $U(\mathbf x,\ell)>V(\mathbf x,\ell)$. We have
\begin{align}\label{E:Comparison_5}
{\mathcal M}:=&\max_{i\in{\mathbb M}}\sup_{[0,T]\times(0,\infty)^2}(U-V)(\mathbf x,i)=\max_{i\in{\mathbb M},{\mathbf x}\in\Sigma_1}(U-V)(\mathbf x,i)>0.
\end{align}

For any $\epsilon>0$, define a function $\Psi^\epsilon$ by
$$
\Psi^\epsilon({\mathbf x},{\mathbf y},\ell):=U({\mathbf x},\ell)-V({\mathbf y},\ell)-\psi^\epsilon({\mathbf x},{\mathbf y}),
$$
where $\psi^\epsilon$ is defined by
\begin{equation}\label{E:Comparison_7}
\psi^\epsilon({\mathbf x},{\mathbf y}):=\frac{1}{2\epsilon}\left|{\mathbf x}-{\mathbf y}\right|^2.
\end{equation}

For each $\ell\in{\mathbb M}$, $\Psi^\epsilon(\cdot,\cdot,\ell)$ is continuous. Hence its maximum, denoted by ${\mathcal M}_\ell^\epsilon$,  over the compact set
$\Sigma_1\times\Sigma_1$ can be attained at
$({\mathbf x}_\ell^\epsilon,{\mathbf y}_\ell^\epsilon)$. Assume that the maximum ${\mathcal M}^\epsilon
:=\max_{\ell\in{\mathbb M}}{\mathcal M}_\ell^\epsilon$ is attained at $\ell^\epsilon\in{\mathbb M}$ and
$({\mathbf x}_{\ell^\epsilon}^\epsilon,{\mathbf y}_{\ell^\epsilon}^\epsilon)$. We have
\begin{align}\label{E:Comparison_8}
{\mathcal M}\leq{\mathcal M}^\epsilon=&\Psi^\epsilon({\mathbf x}_{\ell^\epsilon}^\epsilon,{\mathbf y}_{\ell^\epsilon}^\epsilon,\ell^\epsilon)
\leq U({\mathbf x}_{\ell^\epsilon}^\epsilon,\ell^\epsilon)-V({\mathbf y}_{\ell^\epsilon}^\epsilon,\ell^\epsilon).
\end{align}

As $\epsilon\to0$, the bounded sequence $({\mathbf x}_{\ell^\epsilon}^\epsilon,{\mathbf y}_{\ell^\epsilon}^\epsilon)$ converges,
up to a subsequence, to a limit $(\bar{\mathbf x},\bar{\mathbf y})\in\Sigma_1\times\Sigma_1$.
By assumption, ${\mathbb M}$ is finite. For each $\ell\in{\mathbb M}$, the sequence $({\mathbf x}_{\ell}^\epsilon,{\mathbf y}_{\ell}^\epsilon)$
converges, up to a subsequence, to its limit, respectively. Therefore, for $\epsilon$ small
enough, $\ell^\epsilon=\bar{\ell}$ for $\bar{\ell}\in{\mathbb M}$.

Since $\{U(\cdot,\ell)\}_{\ell\in{\mathbb M}}$ and $\{V(\cdot,\ell)\}_{\ell\in{\mathbb M}}$ are continuous and ${\mathbb M}$ is a finite set,
$U({\mathbf x}_{\ell^\epsilon}^\epsilon,\ell^\epsilon)-V({\mathbf y}_{\ell^\epsilon}^\epsilon,\ell^\epsilon)$ is bounded for all $\epsilon>0$.
From \eqref{E:Comparison_8}, $\psi^\epsilon({\mathbf x}_{\ell^\epsilon}^\epsilon,{\mathbf y}_{\ell^\epsilon}^\epsilon)$ is also bounded, which implies that
\begin{equation} \label{eqn1}
\lim_{\epsilon\to0}({\mathbf x}_{\ell^\epsilon}^\epsilon,{\mathbf y}_{\ell^\epsilon}^\epsilon) = (\bar{\mathbf x},\bar{\mathbf x}),\quad
\lim_{\epsilon\to0}{\mathcal M}^\epsilon={\mathcal M}=(U-V)(\bar{\mathbf x},\bar{\ell}), \quad
\lim_{\epsilon\to0}\psi^\epsilon({\mathbf x}_{\ell^\epsilon}^\epsilon,{\mathbf y}_{\ell^\epsilon}^\epsilon)=0.
\end{equation}

By applying Ishii's Lemma (see \cite[Lemma 4.4.6, Remark 4.4.9]{P10}) to function $\Psi^{\epsilon}$ at its maximum point $({\mathbf x}_{\ell^\epsilon}^\epsilon,{\mathbf y}_{\ell^\epsilon}^\epsilon)$ with $\ell=\ell^\epsilon$,
we can find $M^\epsilon,N^\epsilon\in{\mathbb R}$ such that
\begin{align*}
\left(\frac{1}{\epsilon}\left({\mathbf x}_{\ell^\epsilon}^\epsilon-{\mathbf y}_{\ell^\epsilon}^\epsilon\right),M^\epsilon\right)\in\overline{\mathcal P}^{2,+}
U({\mathbf x}_{\ell^\epsilon}^\epsilon, \ell^\epsilon),\quad
\left(\frac{1}{\epsilon}\left({\mathbf x}_{\ell^\epsilon}^\epsilon-{\mathbf y}_{\ell^\epsilon}^\epsilon\right),N^\epsilon\right)\in\overline{\mathcal P}^{2,-}
V({\mathbf y}_{\ell^\epsilon}^\epsilon, \ell^\epsilon)
\end{align*}
and, for any $c,d\in{\mathbb R}$,
\begin{align}\label{E:Comparison_15}
c^2M^\epsilon-d^2N^\epsilon\leq\frac{3}{\epsilon}(c-d)^2.
\end{align}

Denote by
$$
(\eta^\epsilon_1, \eta^\epsilon_2,\eta^\epsilon_3)
:= \frac{1}{\epsilon}\left({\mathbf x}_{\ell^\epsilon}^\epsilon-{\mathbf y}_{\ell^\epsilon}^\epsilon\right)
=\left(\frac{1}{\epsilon}(t_{\ell^\epsilon}^\epsilon-
r_{\ell^\epsilon}^\epsilon),
\frac{1}{\epsilon}(x_{\ell^\epsilon}^\epsilon-y_{\ell^\epsilon}^\epsilon),
\frac{1}{\epsilon}(s_{\ell^\epsilon}^\epsilon-v_{\ell^\epsilon}^\epsilon)
\right).
$$
Since $U$ is a viscosity subsolution and $V$ a supersolution, by the definition of viscosity solutions
in terms of superjets and subjets, we have
\begin{align}
\beta U({\mathbf x}_{\ell^\epsilon}^\epsilon,\ell^\epsilon)&-\eta^\epsilon_1-{\mathcal Q}U({\mathbf x}_{\ell^\epsilon}^\epsilon,\ell^\epsilon)-{\mathcal H}\left({\mathbf x}_{\ell^\epsilon}^\epsilon,\eta^\epsilon_2,
\eta^\epsilon_3,M^\epsilon,\ell^\epsilon\right)\leq0\label{E:Comparison_16}\\
\beta V({\mathbf y}_{\ell^\epsilon}^\epsilon,\ell^\epsilon)&-\eta^\epsilon_1-{\mathcal Q}V({\mathbf y}_{\ell^\epsilon}^\epsilon,\ell^\epsilon)-{\mathcal H}\left({\mathbf y}_{\ell^\epsilon}^\epsilon,\eta^\epsilon_2,
\eta^\epsilon_3,N^\epsilon,\ell^\epsilon\right)\geq0,\label{E:Comparison_17}
\end{align}
where the Hamiltonian ${\mathcal H}$ is defined in \eqref{E:Hamiltonian_1}.
By the definition of operator ${\mathcal Q}$ we have
\begin{align}\label{E:Comparison_19}
&{\mathcal Q}\left(U({\mathbf x}_{\ell^\epsilon}^\epsilon,\ell^\epsilon)-
V({\mathbf y}_{\ell^\epsilon}^\epsilon,\ell^\epsilon)\right)\notag\\
=&\sum_{j\neq\ell^\epsilon}q_{\ell^\epsilon j}\left[\left(U({\mathbf x}_{\ell^\epsilon}^\epsilon,j)
-V({\mathbf y}_{\ell^\epsilon}^\epsilon,j)\right)
-\left(U({\mathbf x}_{\ell^\epsilon}^\epsilon,\ell^\epsilon)
-V({\mathbf y}_{\ell^\epsilon}^\epsilon,\ell^\epsilon)\right)\right]\notag\\
=&\sum_{j\neq\ell^\epsilon}q_{\ell^\epsilon j}\left[\Psi^\epsilon({\mathbf x}_{\ell^\epsilon}^\epsilon,{\mathbf y}_{\ell^\epsilon}^\epsilon,j)
-\Psi^\epsilon({\mathbf x}_{\ell^\epsilon}^\epsilon,{\mathbf y}_{\ell^\epsilon}^\epsilon,\ell^\epsilon)\right]\notag\\
\leq&0.
\end{align}
The last line is from the fact that
$\Psi^\epsilon({\mathbf x}_{\ell^\epsilon}^\epsilon,{\mathbf y}_{\ell^\epsilon}^\epsilon,\ell^\epsilon)$ is the maximum of
$\Psi^\epsilon({\mathbf x},{\mathbf y},\ell)$ over $\ell\in{\mathbb M}$ and $({\mathbf x},{\mathbf y})\in\Sigma_1\times\Sigma_1$.

Subtracting \eqref{E:Comparison_17} from \eqref{E:Comparison_16} and rearranging, also noting (\ref{E:Comparison_19}), we have
\begin{align}\label{E:Comparison_18}
&\beta\left(U({\mathbf x}_{\ell^\epsilon}^\epsilon,\ell^\epsilon)-V({\mathbf y}_{\ell^\epsilon}^\epsilon,\ell^\epsilon)\right)
\leq {\mathcal H}\left({\mathbf x}_{\ell^\epsilon}^\epsilon,
\eta^\epsilon_2,
\eta^\epsilon_3,M^\epsilon,\ell^\epsilon\right)-{\mathcal H}\left({\mathbf y}_{\ell^\epsilon}^\epsilon,
\eta^\epsilon_2,
\eta^\epsilon_3,N^\epsilon,\ell^\epsilon\right).
\end{align}

By the definition of the  Hamiltonian function, for any $\delta>0$, there exists a  $\upsilon^\delta\in U$ such that
\begin{equation}\label{E:Comparison_20}
{\mathcal H}\left({\mathbf x}_{\ell^\epsilon}^\epsilon,
\eta^\epsilon_2,
\eta^\epsilon_3,M^\epsilon,\ell^\epsilon\right)-\delta
\leq-\upsilon^\delta\eta^\epsilon_2+\mu\left(t_{\ell^\epsilon}^\epsilon,
s_{\ell^\epsilon}^\epsilon,\upsilon^\delta,\ell^\epsilon\right)\eta^\epsilon_3
+\frac{1}{2}\sigma^2\left(t_{\ell^\epsilon}^\epsilon,s_{\ell^\epsilon}^\epsilon,\upsilon^\delta,\ell^\epsilon\right)M^\epsilon
+\phi(\upsilon^\delta)s_{\ell^\epsilon}^\epsilon.
\end{equation}
We also have
\begin{equation}\label{E:Comparison_21}
{\mathcal H}\left({\mathbf y}_{\ell^\epsilon}^\epsilon,
\eta^\epsilon_2,
\eta^\epsilon_3,N^\epsilon,\ell^\epsilon\right)
\geq-\upsilon^\delta\eta^\epsilon_2+\mu\left(r_{\ell^\epsilon}^\epsilon,
v_{\ell^\epsilon}^\epsilon,\upsilon^\delta,\ell^\epsilon\right)\eta^\epsilon_3
+\frac{1}{2}\sigma^2\left(s_{\ell^\epsilon}^\epsilon,v_{\ell^\epsilon}^\epsilon,\upsilon^\delta,\ell^\epsilon\right)N^\epsilon
+\phi(\upsilon^\delta)v_{\ell^\epsilon}^\epsilon.
\end{equation}
Subtracting \eqref{E:Comparison_21} from \eqref{E:Comparison_20}, we get
\begin{align}\label{E:Comparison_22}
&{\mathcal H}\left({\mathbf x}_{\ell^\epsilon}^\epsilon,
\eta^\epsilon_2,
\eta^\epsilon_3,M^\epsilon,\ell^\epsilon\right)-{\mathcal H}\left({\mathbf y}_{\ell^\epsilon}^\epsilon,
\eta^\epsilon_2,
\eta^\epsilon_3,N^\epsilon,\ell^\epsilon\right)-\delta\notag\\
\leq&\eta^\epsilon_3\left[
\mu\left(t_{\ell^\epsilon}^\epsilon,s_{\ell^\epsilon}^\epsilon,\upsilon^\delta,\ell^\epsilon\right)
-\mu\left(r_{\ell^\epsilon}^\epsilon,v_{\ell^\epsilon}^\epsilon,\upsilon^\delta,\ell^\epsilon\right)\right]\notag\\
&+\frac{3}{2\epsilon}(\sigma\left(t_{\ell^\epsilon}^\epsilon,s_{\ell^\epsilon}^\epsilon,\upsilon^\delta,\ell^\epsilon\right)
-\sigma\left(r_{\ell^\epsilon}^\epsilon,v_{\ell^\epsilon}^\epsilon,\upsilon^\delta,\ell^\epsilon\right))^2
+\phi(\upsilon^\delta)\left(s_{\ell^\epsilon}^\epsilon-v_{\ell^\epsilon}^\epsilon\right).
\end{align}
Here we have used  \eqref{E:Comparison_15}.

By Assumption~\ref{ass1} on $\mu$ and $\sigma$, (\ref{eqn1}) and the boundedness of $\phi(\upsilon^\delta)$, the right side of \eqref{E:Comparison_22} tends to 0
as $\epsilon\rightarrow0$. Since $\delta>0$ is chosen arbitrarily, we have
\begin{align}\label{E:Comparison_23}
\limsup_{\epsilon\to0}&\left\{{\mathcal H}\left({\mathbf x}_{\ell^\epsilon}^\epsilon,
\eta^\epsilon_2,
\eta^\epsilon_3,M^\epsilon,\ell^\epsilon\right)\right.\left.-{\mathcal H}\left({\mathbf y}_{\ell^\epsilon}^\epsilon,
\eta^\epsilon_2,
\eta^\epsilon_3,N^\epsilon,\ell^\epsilon\right)\right\}\leq0.
\end{align}
Combining (\ref{eqn1}), \eqref{E:Comparison_18} and \eqref{E:Comparison_23}, we have
$$
\beta\left(U\left(\bar{\mathbf x},\bar{\ell}\right)-V\left(\bar{\mathbf x},\bar{\ell}\right)\right)\leq0,
$$
which contradicts  \eqref{E:Comparison_5}. Therefore
$U\leq V$ on $[0,T)\times(0,\infty)\times(0,\infty)\times{\mathbb M}$.
We have proved Theorem~\ref{T:Comparison}.

\bigskip
\noindent{\bf Acknowledgement}. The authors thank two anonymous referees for their suggestions and comments that have helped to improve the paper.

\end{document}